\documentclass[12pt, draftclsnofoot, onecolumn]{IEEEtran}
\usepackage{graphicx,indentfirst}
\usepackage{booktabs}
\usepackage{indentfirst}
\usepackage{amsmath}
\usepackage{cite}
\usepackage{stfloats}
\usepackage{multicol}
\usepackage{multirow}
\usepackage{slashbox}
\usepackage{subfigure}
\usepackage[english]{babel}
\usepackage{amsthm}
\usepackage{amsmath}
\usepackage{multirow}
\usepackage{paralist}
\usepackage{algorithm, algorithmic}
\usepackage{amssymb}
\usepackage{amssymb}
\usepackage{mathrsfs}
\usepackage{amsfonts}
\usepackage{color}
\newtheorem{proposition}{Proposition}
\newtheorem{theorem}{Theorem}
\newtheorem{assumption}{Assumption}

\theoremstyle{plain}
\newtheorem{Sup}{Sub-problem}

\IEEEoverridecommandlockouts

\begin{document}
\bibliographystyle{IEEE2}
\title{Dynamic Pricing for Revenue Maximization\\ in Mobile Social Data Market \\with Network Effects}
\author{Zehui~Xiong, Dusit~Niyato,~\IEEEmembership{Fellow,~IEEE,} Ping~Wang,~\IEEEmembership{Senior Member,~IEEE,} Zhu~Han,~\IEEEmembership{Fellow,~IEEE,} and~Yang~Zhang,~\IEEEmembership{Member,~IEEE}\\
\thanks{Z. Xiong, D. Niyato and P. Wang are with School of Computer Science and Engineering, Nanyang Technological University, Singapore. Z. Han is with Electrical and Computer Engineering, University of Houston, Houston, USA. Y. Zhang is with School of Computer Science and Technology, Wuhan University of Technology, China.
\vspace*{-6mm}}}
\maketitle
\vspace*{-8mm}
\begin{abstract}
Mobile data demand is increasing tremendously in wireless social networks, and thus an efficient pricing scheme for social-enabled services is urgently needed. Though static pricing is dominant in the actual data market, price intuitively ought to be dynamically changed to yield greater revenue. The critical question is how to design the optimal dynamic pricing scheme, with prospects for maximizing the expected long-term revenue. In this paper, we study the sequential dynamic pricing scheme of a monopoly mobile network operator in the social data market. In the market, the operator, i.e., the seller, individually offers each mobile user, i.e., the buyer, a certain price in multiple time periods dynamically and repeatedly. The proposed scheme exploits the network effects in the mobile users' behaviors that boost the social data demand. Furthermore, due to limited radio resource, the impact of wireless network congestion is taken into account in the pricing scheme. Thereafter, we propose a modified sequential pricing policy in order to ensure social fairness among mobile users in terms of their individual utilities. We analytically demonstrate that the proposed sequential dynamic pricing scheme can help the operator gain greater revenue and mobile users achieve higher total utilities than those of the baseline static pricing scheme. To gain more insights, we further study a simultaneous dynamic pricing scheme in which the operator determines the pricing strategy at the beginning of each time period. Mobile users decide on their individual data demand in each time period simultaneously, considering the network effects in the social domain and the congestion effects in the network domain.  We construct the social graph using Erd\H{o}s-R\'enyi (ER) model and the real dataset based social network for performance evaluation. The numerical results corroborate that the dynamics of pricing schemes over static ones can significantly improve the revenue of the operator. 
\end{abstract}
\newpage
\begin{IEEEkeywords}
Network economics, mobile social data market, network effects, congestion effects, dynamic pricing, revenue maximization.
\end{IEEEkeywords}

\section{Introduction}\label{Sec:Introduction}
The explosion of social application services on mobile platforms such as Facebook, Twitter, and WhatsApp leads to increasing demand of mobile social data. The mobile social services allow users to interact with each other online, and in turn the users are spending increasing amount of time on social service websites~\cite{backstrom2006group}. In 2018, the number of online social media users from mobile platforms has reached 2.958 billion, accounting for 57\% of mobile users~\cite{GlobalDigitalStatistics}. Reciprocally, when more users access more social services, they become more socially connected and their social ties are stronger, leading to even more social data consumption and interpersonal communication~\cite{easley2010networks}. This is verified in~\cite{GlobalDigitalStatistics} that over half of the cellular data consumption comes from the social media activities in mobile platforms and this percentage keeps growing in recent years. In a network, when a user increases its activity in a social service, its social friends are likely to increase their activities accordingly. This phenomenon that social data demand of one user is positively affected by the demand of other users is called \textit{network~effect} in economics~\cite{katz1994systems}. 
Therefore, as the mobile users pay the data fees to access the social services, the mobile network operator has an incentive to encourage more mobile users, i.e., its potential customers, to access the services by consuming more social data. Generally, the stronger the network effects are, the more revenue the mobile network operator gains~\cite{gong2015network, nie2017socially}.

The ``network effect'' is often highlighted in social and economic fields~\cite{tucker2008identifying}. For example, a few existing works~\cite{candogan2012optimal, hartline2008optimal} propose the pricing under equilibrium conditions and guide the operation of a social network concerning the network effects. Nevertheless, as one of the major issues in network economics~\cite{chiu2011exploring}, the network effects have also been investigated in communication networks such as Internet, mobile ad-hoc network (MONET) and peer-to-peer (P2P) networks~\cite{easley2010networks}. However, this potential benefit suffers from the limited capacity in physical communication networks, e.g., bandwidth. The reason is that when users increase their data demand, they may bear higher congestion, e.g., service delay, which restrains them to access and consume more. Consequently, the increasing congestion creates a significant barrier for the network operator, and thus leads to a lower revenue~\cite{gong2015network}. Therefore, mobile users' data demand is not only subject to the network effects in the social domain, but also the congestion effects in the network domain. However, this issue has been largely neglected by the studies in the literature.

In the revenue maximization, pricing plays a crucial role in the mobile data market because it directly affects the network operator's revenue and the user's utilities~\cite{sen2013survey}. Traditionally, the operator only adopted static pricing which is simple flat-rate data plans to engage their users. However, recently, with the popularity of mobile devices and rapid rise of online apps or videos, dynamic pricing emerges as an attractive alternative to yield better adaptation to unpredictable user data demand. The motivation is simple and intuitive: pricing should be leveraged strategically to affect demand to better handle unexploited capacity, and thus yield more revenue~\cite{hxuTCC}. Take some companies in the real world for example, MTN in Uganda and Uninor in India have offered time-dependent pricing for mobile message, where the message price is dynamically changed after one day or even one hour to achieve balance between the supply and demand~\cite{ha2012tube, sen2013survey}. Moreover, China Telecom charges its users a discount data fees during less congested times, e.g., at night, and normal data fees otherwise. Dynamic pricing has become an active field of the revenue management literature, with successful real-world applications in industries, e.g., cloud computing~\cite{hxuTCC}, smart grid~\cite{ferdous2017optimal} and spectrum trading~\cite{yang2013pricing}. With the flexibility to change the price, the optimal dynamic pricing policy for a mobile network operator towards maximizing the expected long-term revenue becomes appealing.

Nevertheless, the presence of network effects and congestion effects in the mobile social data market poses a remarkable challenge to the operation of dynamic pricing therein. To the best of our knowledge, this paper is the first to study the optimal dynamic pricing schemes of a mobile network operator, i.e., a seller, selling social data to a set of mobile users in social data market, where users' behavior is subject to both network effects and congestion effects. We assume that the mobile network operator, i.e., the seller has complete information about the social network and can perfectly charge each user differently, i.e., discriminatory pricing\footnote{The uniform pricing which is more popular is just a special case of the discriminatory pricing, and hence the techniques developed in the paper can be applied similarly and directly for uniform pricing.}~\cite{laffont1998network}. Particularly, the main contributions of this paper are summarized as follows:
\begin{itemize}
 \item Taking time-varying interactions between the mobile network operator and mobile users into account, we propose the sequential dynamic pricing scheme, where the operator individually offers each user a certain price in multiple time periods dynamically.
 \item We model the network effects in the social domain by utilizing the structural properties of the social network, which further increases the social data demand of mobile users. Furthermore, the model incorporates the congestion in the network domain to realistically capture the scarcity of radio resource in wireless network environment. Moreover, we analytically demonstrate that our proposed sequential dynamic pricing can help the mobile network operator to gain greater revenue and mobile users to achieve higher total utilities than those of existing optimal static pricing scheme.
 \item In addition, we propose a modified sequential pricing policy in order to ensure social fairness among mobile users with respect to their individual utilities. To gain more insights, the simultaneous dynamic pricing is developed in which the operator determines the pricing strategy at the beginning of each time period and users decide on their individual data demand in each time period simultaneously. We find the insights that the operator tends to offer the discount price to the users with more social influence which may bring more potential users subsequently.
 \item To characterize the network effects from social networks, we consider two social graphs. The first graph is constructed using the Erd\H{o}s-R\'enyi (ER) model~\cite{albert2000error}, and the second graph is built from a real dataset, i.e., the Brightkite dataset~\cite{cho2011friendship}. The performance evaluation corroborates the fact that the dynamics of pricing schemes over static ones can greatly improve the revenue of the operator.
\end{itemize}

The rest of the paper is organized as follows. In Section~\ref{Sec:Related}, we present a brief review on the related works. In Section~\ref{Sec:Model}, we formulate the revenue maximization problem in the dynamic model. In Section~\ref{Sec:Seqpricing}, we develop and analyze the sequential dynamic pricing to solve the optimization problem formulated in Section~\ref{Sec:Model}. Then, we propose and investigate the simultaneous pricing in Section~\ref{Sec:Simultaneous}. In Section~\ref{Sec:Simulation}, we compare the performance of the proposed dynamic pricing against existing optimal static pricing in terms of the revenue of the mobile network operator and total utilities of the mobile users. Additionally, we evaluate the role of different parameters on the proposed schemes. At last, we conclude the paper in Section~\ref{Sec:Conclusion}.

\section{Related Works}\label{Sec:Related}
The pricing schemes are designed to offer profitable business to the network operators, as well as to create favorable services for the users, which is a useful tool for addressing resource allocation issues~\cite{huang2013wireless}. Since users typically share limited communication resources in physical networks, e.g., bandwidth, congestion effects from physical domain on users' behavior is common and has been extensively studied for pricing strategies by a number of  works~\cite{mackie1995pricing, blumrosen2007welfare, sengupta2009economic, hande2010pricing}. For example, if an Internet network operator becomes oversubscribed, its subscribers suffer from the congestion externality due to limited bandwidth and radio resources. In~\cite{ma2016time}, the authors proposed an optimal design of time and location aware mobile data pricing, which motivates users to smooth traffic and reduce network congestion. Moreover, the network operators have been experiencing several emerging and innovative pricing schemes, e.g., rollover data plans, secondary market scheme~\cite{yu2017mobile} and sponsored data plan~\cite{xiong2017sponsor}. However, very few works take homophily phenomenon into consideration, i.e., network effects, for designing the pricing. The social aspect of mobile networking is an emerging paradigm for network design and optimization~\cite{gong2015network}. In~\cite{brown1987social}, the authors found that the information obtained from social tie connections will influence in decision making. Inspired by~\cite{brown1987social}, the social group utility maximization framework was proposed in~\cite{gong2017social, chen2016exploiting}, which captures the impact of users' diverse social ties that are subject to diverse social relationships. The authors in~\cite{blackburn2013last} showed the evidence of network effects in communication service using the real data analytic, and quantified such an effect using a simple metric. The finding of~\cite{blackburn2013last} provides the motivation for planning pricing mechanisms with network effects in the real markets.

Recently, the network effects have been considered jointly with service pricing from the economic perspective~\cite{hartline2008optimal}. For example, in the pioneering work~\cite{candogan2012optimal}, the authors investigated both the uniform pricing and the discriminatory pricing of the network operator in the presence of network effects. In~\cite{makhdoumi2017optimal, swapna2012dynamic}, the authors discussed the dynamic pricing strategy of divisible social goods with network effects. However, the authors investigated the user behaviors only in social networks without considering the congestion in wireless networks. Therefore, the model introduced therein has a major limitation as it cannot be applied into the wireless network environments, in which the radio resource is scarce and congestion can frequently happen. Subsequently, the authors in~\cite{gong2015network} formulated the interaction between a network operator and mobile users as a two-stage Stackelberg game, by extending the model presented in~\cite{candogan2012optimal}. The uniform pricing scheme was adopted, where the impacts of network effects and congestion were jointly analyzed. 
In~\cite{zhang2016social, zhang2017motivating}, the authors studied the similar problem to that in~\cite{gong2015network}, where the discriminatory pricing strategy was applied by the operator.
\begin{figure}[t]
\centering
\includegraphics[width=0.68\textwidth]{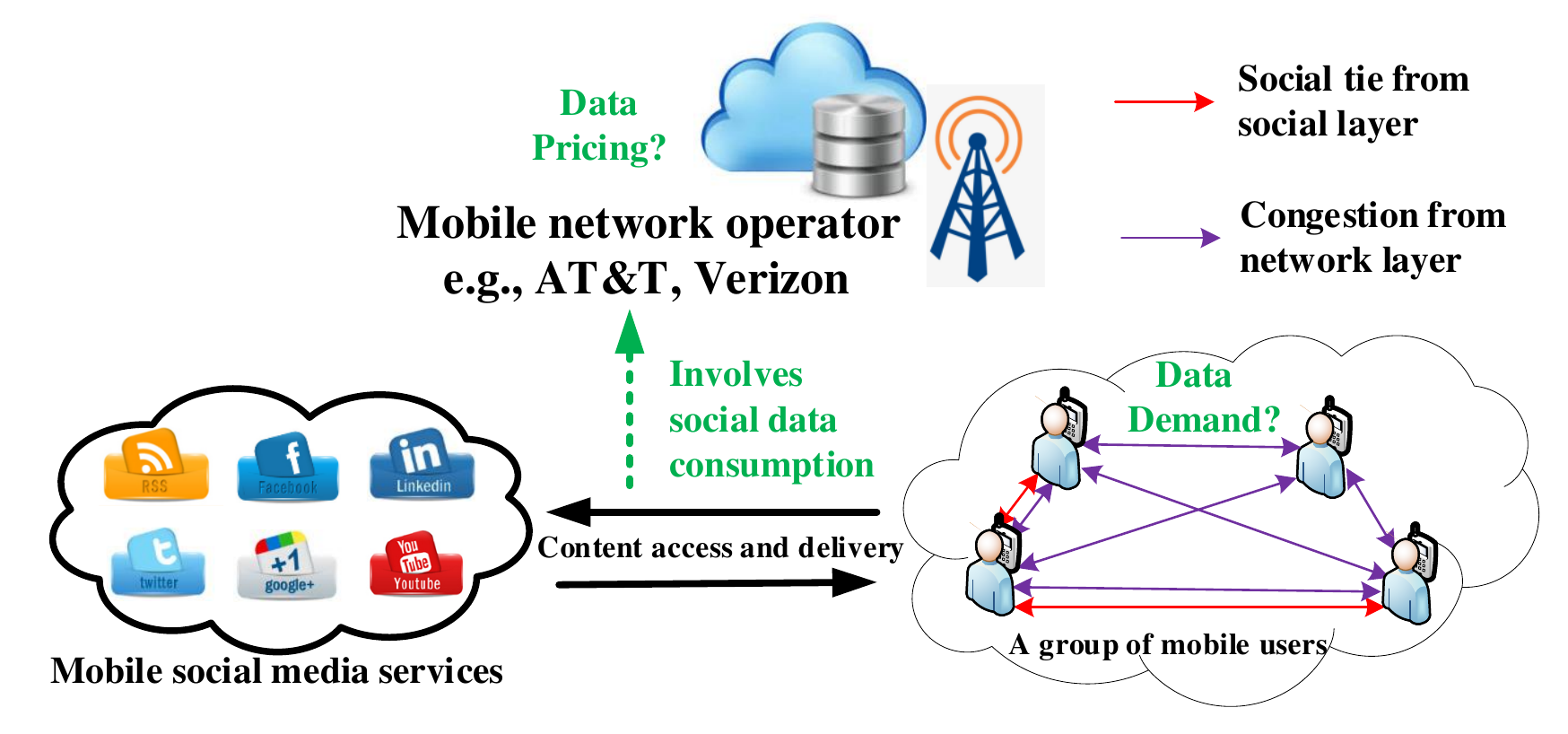}
\caption{System Model.}\label{Fig:Model}
\end{figure}
Nevertheless, they merely formulated the one-shot game to model the interaction between the network operator and mobile users with static pricing. In other words, the operator cannot utilize its ability to modify its strategy in response to the observed history. In view of this gap, we study the sequential dynamic pricing scheme in~\cite{xiong2017network}, where the users' behaviors are subject to both the network effects and congestion effects. In this paper, more analytical results have been added. Specifically, the sequential dynamic pricing scheme is implemented to address the social fairness issue. Moreover, we investigate the simultaneous dynamic pricing to gain more insights. 

\section{System Model}\label{Sec:Model}
\subsection{Basic static model}
In a social data market under our consideration, there is a set of mobile users ${\cal N} \buildrel \Delta \over = \{ 1, \ldots ,N\}$, as shown in Fig.~\ref{Fig:Model}.
Each mobile user $i \in {\cal N}$, i.e., the buyer, determines a non-negative quantity of the data demand from a Mobile Network Operator (MNO) for accessing social services, denoted by $x_i$ where ${x_i} \in [0,\infty )$. Let $\mathbf{x} \buildrel \Delta \over = ({x_1}, \ldots ,{x_N})$ denote the demand profile of all the users and ${\mathbf{x}}_{-i}$ denote the demand profile without that of user $i$. Given the offered price per unit of data ${p_i}$, the myopic user chooses the action that maximizes its utility. Then, the basic utility of the user is formulated as follows:
\begin{equation}\label{Eq:1}
	{v_i}({x_i},{\mathbf{x}}_{-i},{p_i}) = {f_i}(x_i) + \sum\limits_{j \in \mathcal{N}} {{g_{ij}}} {x_i}{x_j} - {p_i}(x_i).
\end{equation}
${f_i}(x)$ represents the private utility which is an internal effect or gain that user $i$ obtains from the social data demand. It is a linear-quadratic function ${f_i}(x_i) = {a_i}{x_i} - {b_i}{x_i}^2$, where ${a_i}$ and ${b_i}$ are the coefficients that capture the intrinsic value of the data demand to different users with heterogeneity. $g_{ij}$ indicates the external benefits due to the network effects. In social networks, one user can enjoy an additional benefit from the actions of other users~\cite{easley2010networks}. In particular, $g_{ij}$ refers to the influence of user $j$ on user $i$, which we assume to be unidirectional. In other words, $g_{ij} = g_{ji}$ represents the social tie between users $i$ and $j$, i.e., the social tie is reciprocal. Nevertheless, the same model can be applied to bidirectional social relations straightforwardly. Moreover, ${g_{ii}} = 0$ which means one user cannot influence oneself. The fee that the MNO charges to user $i$ is equal to ${p_i}{x_i}$, i.e., usage-based pricing. In this paper, we consider the case that the MNO can charge the different users with different prices, i.e., the discriminatory pricing scheme~\cite{kang2012price, tushar2017price}.

Moreover, the users may experience congestion with the increase of social data demand at the same time, e.g., service delay, due to the limited radio resources in mobile networks. Therefore, we investigate the users' behaviors by jointly incorporating the network effects and congestion effects. In particular, we use another term $\frac{c}{2}{\left(\sum\nolimits_{j \in {\cal N}} {{x_j}} \right)^2}$ to denote the congestion effects, where $\frac{c}{2}$ is the coefficient of congestion. The quadratic sum form $\frac{c}{2}{\left({\sum _{j \in {\cal N}}}{x_j}\right)^2}$ indicates that all users experience and are affected by the congestion, and the marginal cost of the congestion increases as the total demand increases~\cite{gong2015network, zhang2017motivating}. Then, the net utility of user $i$ extended from (\ref{Eq:1}) is expressed as follows:
\begin{equation}\label{Eq:2}
	{u_i}({x_i},\mathbf{{x}}_{-i},{p_i}) = {a_i}{x_i} - {b_i}{x_i}^2 + \sum\limits_{j \in \mathcal{N}} {{g_{ij}}} {x_i}{x_j} - \frac{c}{2}{\left(\sum\limits_{j \in \mathcal{N}} {{x_j}} \right)^2} - {p_i}{x_i}.
\end{equation}
The objective of the MNO is to maximize its revenue which is expressed as follows:
\begin{equation}\label{Eq:3}
	\max_{ p_i } \sum\limits_{i \in \cal N} {{p_i}{x_i}}.
\end{equation}

Naturally, the two-stage Stackelberg game can be adopted to model the interaction between the MNO and users~\cite{zhang2016social, zhang2017motivating}. In the upper Stage I, the MNO, i.e., the leader, determines price $p_i$ to maximize its revenue. In the lower Stage II, the users, i.e., the followers, decides on their individual data demand $x_i$ in order to maximize their utilities being aware of price $p_i$ set by the MNO. Using the backward induction methods, the existence and uniqueness of a set of strategies where no user deviates based on the given price, i.e., the Nash equilibrium, is investigated first. Based on this Nash equilibrium, the optimal pricing of the MNO can be further addressed.

\subsection{Dynamic model extension}
Here we propose a discrete-time sequential dynamic pricing scheme for the revenue maximization of the MNO over multiple time periods, e.g., days or weeks. We first define ${\bf{p}}^{(k)} = \left[ p_1^{(k)} , \ldots , p_i^{(k)},  \ldots, p_N^{(k)} \right]^\top $ as the instantaneous pricing strategies, where $p_i^{(k)}$ is the price charged to the user $i$ in time period $k$. Let $x_i^{(k)}$ be the social data demand from user $i$ given the price $p_i^{(k)}$. The price vector can be varied over time periods $t$. In each time period, the MNO approaches the users in a certain order for charging which is denoted by ${{\cal R}} \buildrel \Delta \over = \{ r_1, \ldots ,r_N \}$, where $r_i \in {\cal N}$ and $r_i \ne r_j$, $\forall i \ne j$.

We assume that the users have complete information about the past history of the demand of all users in the network. This information can be obtained through a long-term learning or side channel, e.g., statistics from the operator itself or from third party marketing firm~\cite{sinkula1994market}. The information known by user $i$ in the time period $k$ can be represented by tuples, i.e., $\mathcal{I}_{i}^{(k)}= \left(x_i^{(1:k-1)},p_i^{(1:k-1)},{\bf{x}}_{ - i}^{(1:k - 1)},{[x_j^{(k)}]_{j \in {\cal N}}}\right)$, where $x_i^{(1:k-1)}$ is the set of historical demand of user $i$ from time period $1$ to period $k-1$, i.e., $x_i^{(1:k-1)} = \{ x_i^{k'} \}_{k'=1,\ldots,k-1}$. Similarly, $p_i^{(1:k-1)}$ is a set of historical price offered to user $i$, i.e., $p_i^{(1:k-1)} = \{ p_i^{k'} \}_{k'=1,\ldots,k-1}$, and ${\bf{x}}_{ - i}^{(1:k - 1)}$ is a set of historical demands of all other users except user $i$, i.e., ${\bf{x}}_{ - i}^{(1:k - 1)} = \{ {\bf{x}}_{ - i}^{k'} \}_{k'=1,\ldots,k-1}$. Let $y_i^{(k)} = \sum\limits_{k' = 1}^k {{x_i^{(k')}}}$ be the cumulative demand (i.e., quantity of social data) purchased by user $i$ from time period $1$ to period $k$. In the time period $k$, user $i$ aims to maximize its expected utility, which is formulated as
\begin{multline}\label{Eq:4}
{u_i}\left(x_i^{(k)},{\mathcal{I}}_{i}^{(k)},p_i^{(k)}\right) = {a_i}\left(y_i^{(k - 1)} + x_i^{(k)}\right)  - {b_i}{\left(y_i^{(k - 1)} + x_i^{(k)}\right)^2} + {\sum _{j \in \cal N}}{g_{ij}} \left(y_j^{(k-1)}+x_j^{(k)}\right)\\ \times \left(y_i^{(k - 1)} + x_i^{(k)}\right)- \frac{c}{2} {\sum\limits_{j \in \cal N} \left(y_j^{(k - 1)} + x_j^{(k)} \right)^2}   - \sum\limits_{t = 1}^k {p_i^{(t)}x_i^{(t)}}.
\end{multline}

The objective of the MNO is to maximize its cumulative revenue. In each time period $k$, the MNO chooses $p_i^{(k)}$ to maximize its revenue in the current time period. This revenue is given by $\sum\limits_{i \in {\cal N}} {p_i^{(k)}x_i^{(k)}}$. Note that this revenue maximization is the dynamic optimization. This is due to the fact that the revenue of the MNO depends on the demand of all users. The demand of each user depends on the historical demand profile of the other users, i.e., $\mathcal{I}_{i}^{(k)}$, as expressed in (\ref{Eq:4}). For example, if one user knows that its social neighbors have accessed a certain video, that user is likely to access the same video content.

The sequential dynamic pricing scheme~\cite{zhou2017optimal} can be obtained from solving the cumulative revenue maximization problem for the MNO, which is given by
\begin{equation}\label{Eq:5}
	\max {\Pi _{\cal R}} =\max \sum\limits_{k = 1}^\infty  {\sum\limits_{i \in \cal N} {p_{{r_i}}^{(k)} x_{{r_i}}^{(k)}} },
\end{equation}
and the pricing is run for a long period of time which is equivalent to time infinity. Recall that in each time period $k$, the revenue maximization is expressed by $\max \sum\limits_{i \in \cal N} {p_{{r_i}}^{(k)}x_{{r_i}}^{(k)}}$, which can be divided into sub-problems of maximizing the revenue gained from each user $i$. The first sub-problem is to determine an optimal price charging to each user given a fixed order. Then, the second sub-problem is to determine an optimal order of users to be visited by the MNO.
\begin{Sup}
\begin{equation}\label{Eq:6}
	\Pi _{\cal R}^{(k)} = \max_{ p_{{r_i}}^{(k)} } 	\sum\limits_{i \in {\cal N}} p_{{r_i}}^{(k)} \widetilde{x}_{{r_i}}^{(k)},
\end{equation}
\qquad \qquad \qquad \qquad \qquad \emph{\textbf{subject to}} $\widetilde{x}_{{r_i}}^{(k)} = \arg \max_{ \acute{x}_{{r_i}}^{(k)} }  {u_{{r_i}}}( \acute{x}_{{r_i}}^{(k)},{\cal I}_{{r_i}}^{(k)},p_{{r_i}}^{(k)}), \forall i,$ \\
\emph{where the order of users $\mathcal{R} = \{ r_1, \ldots ,r_i, \ldots , r_N\}$ is fixed.}
\end{Sup}

\begin{Sup}
\begin{equation}\label{Eq:7}
\max_{ {\cal R} } \Pi_{\cal R}^{(k)},
\end{equation}
 \qquad \qquad \qquad \qquad  \qquad \qquad \quad \quad\emph{\textbf{subject to}} ${\cal R} \in {\cal Q} ({\cal N})$,
\\\emph{where ${\cal Q} ({\cal N})$ is the set of all possible orders of the users in ${\cal N}$. In other words, ${\cal Q} ({\cal N})$ is the set of all possible orders of users.}
\end{Sup}
The Sub-problem~$2$ can be regarded as a dynamic optimization problem. In particular, the MNO has to select the order of users to be visited and the prices charged to each user to maximize its revenue. After fixing the order $\cal R$, we denote ${\widetilde {{x_i}}^{(k)}}$ and ${\widetilde {{p_i}}^{(k)}}$ as the optimal demand of user $i$ and optimal price charged by MNO to user $i$ in time period $k$. However, the number of possible orders makes the problem complicated and intractable. Specifically, in each time period, we need to check $n!$ orders to determine an optimal one. Nevertheless, we observe some special structure characterized by the theorems given in the next section.

\section{Sequential Dynamic Pricing Scheme}\label{Sec:Seqpricing}

\subsection{Problem formulation}
In what follows, we denote ${\bf{\Lambda}}  = diag\left(\left[2{b_1},2{b_2}, \ldots ,2{b_N}\right]^\top\right)$, ${{\bf{\Lambda}} _c} = diag\left(\left[\frac{c}{2},\frac{c}{2}, \ldots ,\frac{c}{2}\right]^\top\right)+ {\bf{\Lambda}}  = diag\left(\left[2{b_1} + \frac{c}{2},2{b_2} + \frac{c}{2}, \ldots ,2{b_N} + \frac{c}{2}\right]^\top\right)$. We further denote ${\bf{a}} = [a_1, a_2, \ldots, a_N]^\top >{\bf{0}}$, and ${\bf{1}} = [1, 1, \ldots, 1]^\top$. In addition, ${\mathcal{C}}$ is an $N\times N $ matrix with $c$ being every element inside. ${\bf{I}}$ is an $N\times N $ identity matrix, and $G={\{g_{ij}}\}_{i, j \in \cal N}$.
\begin{assumption}
For all $i \in \cal N$, $2{b_i} > \sum\nolimits_{j \in \cal N} {\left( {{g_{ij}} - c} \right)} $.
\end{assumption}

Similar to that in~\cite{gong2015network, zhang2017motivating, xiong2017sponsor}, we make the above assumption to ensure the boundedness of the social data demand from each user, and we have the following theorem.

\begin{theorem}
The optimal revenue of the MNO in the time period $k$, $\Pi_{\cal R}^{(k)}$, and the optimal solution $ \widetilde{\bf{x}}_{\cal{R}}^{(k)}$ are unique and irrelevant to the order $\cal R$, provided that Assumption 1 is satisfied.
\end{theorem}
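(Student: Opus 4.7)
The plan is to reduce each stage-$k$ problem to a single strictly concave quadratic program whose data are invariant under permutations of $\mathcal{R}$, so that both uniqueness and order-independence follow at once.

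First, letting $X_i^{(k)} := y_i^{(k-1)} + x_i^{(k)}$ denote the cumulative demand through period $k$, I would differentiate $u_i$ in \eqref{Eq:4} with respect to $x_i^{(k)}$. The FOC is linear in $\{X_j^{(k)}\}_{j \in \mathcal{N}}$, and stacking over $i \in \mathcal{N}$ yields a matrix equation of the form $M\,\widetilde{\mathbf{X}}^{(k)} = \mathbf{a} - \mathbf{p}^{(k)}$, where $M$ is an order-invariant combination of $\mathbf{\Lambda}$, $G$, and $\mathcal{C}$ (essentially $\mathbf{\Lambda}_c - G$ up to the precise bookkeeping between the diagonal and $\mathcal{C}$). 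Each $u_i$ is strictly concave in $x_i^{(k)}$, so the FOC characterizes the unique user best response. Under Assumption 1, $M$ is strictly row-diagonally dominant with positive diagonal and hence invertible, so $\widetilde{\mathbf{X}}^{(k)}$ is a well-defined affine function of $\mathbf{p}^{(k)}$.

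Next, substituting this best response into the stage revenue gives $\Pi^{(k)}(\mathbf{p}^{(k)}) = \mathbf{p}^{(k)\top}\bigl[M^{-1}(\mathbf{a}-\mathbf{p}^{(k)}) - \mathbf{y}^{(k-1)}\bigr]$, a quadratic whose symmetrized Hessian equals $-\bigl(M^{-1}+(M^{-1})^{\top}\bigr)$ and is negative definite by the previous step. Hence the MNO's problem admits a unique stationary point $\widetilde{\mathbf{p}}^{(k)}$, producing a unique demand profile $\widetilde{\mathbf{x}}_{\mathcal{R}}^{(k)}$ and optimal stage revenue $\Pi_{\mathcal{R}}^{(k)}$. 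Because the primitives $\mathbf{a}, \mathbf{\Lambda}, G, \mathcal{C}$, and $\mathbf{y}^{(k-1)}$ determining both $M$ and the objective are invariant under any relabeling of users in $\mathcal{R}$, the unique optimum inherits that invariance.

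The main obstacle I anticipate is reconciling the above simultaneous-FOC characterization with the sequential visit protocol: a user visited later in period $k$ can observe current-period actions of earlier users, so the subgame-perfect equilibrium could in principle differ from the simultaneous Nash equilibrium. I would address this by showing that in this linear-quadratic setting each user's marginal utility depends only on current cumulative demands and its own offered price, so no user can strictly benefit by deviating regardless of when it is approached; consequently the sequential equilibrium and the simultaneous FOC solution coincide, and the order-independence claim follows.
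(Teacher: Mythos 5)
There is a genuine gap, and it sits exactly where you flagged the ``main obstacle'': the sequential subgame-perfect solution and the simultaneous FOC solution do \emph{not} coincide, so the reduction to the single permutation-invariant quadratic program $M\,\widetilde{\mathbf{X}}^{(k)}=\mathbf{a}-\mathbf{p}^{(k)}$, $\Pi^{(k)}=\mathbf{p}^{(k)\top}\bigl[M^{-1}(\mathbf{a}-\mathbf{p}^{(k)})-\mathbf{y}^{(k-1)}\bigr]$ computes the wrong objective. In the sequential protocol the MNO extracts user $i$'s marginal willingness to pay \emph{at the moment of the visit}, and at that moment user $i$ has observed only the current-period demands of users visited before it; hence the surplus-extracting price carries the terms $\sum_{j<i}g_{ij}x_j^{(k)}$ and $-c\sum_{j<i}x_j^{(k)}$ (see the price formula in Algorithm~1 and Eq.~(\ref{Eq:linearprice}), where $\widetilde p_i^{(k)}$ explicitly depends on the position $m$ in the order). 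Summing $p_ix_i$ over $i$ then yields the cross term $\sum_i\sum_{j<i}g_{ij}x_ix_j=\tfrac12\sum_{i,j}g_{ij}x_ix_j$, whereas your simultaneous best-response map charges every user for the full $\sum_{j}g_{ij}X_j^{(k)}$ and hence double-counts each pair: your stage objective has Hessian contribution $-2\mathbf{G}$ on the cross terms where the paper's has $-\mathbf{G}$. The two quadratic programs have different maximizers and different optimal values, so ``no user can strictly benefit by deviating'' does not rescue the equivalence --- the discrepancy is not about user deviations but about what surplus the MNO is \emph{able} to price at each visit. Relatedly, the order-invariance in the theorem is not a free consequence of the primitives being permutation-invariant: the individual prices genuinely depend on the order (this is the entire point of the paper's fairness subsection), and what is invariant is only the aggregate revenue and the demand profile, by virtue of the symmetrization identity $\sum_i\sum_{j<i}g_{ij}x_ix_j=\tfrac12\sum_{i,j}g_{ij}x_ix_j$, which requires $g_{ij}=g_{ji}$. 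Your argument never uses this symmetry, which is a signal that it is proving a statement about a different mechanism.

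The paper's proof instead writes the stage revenue directly as an order-dependent-looking function of $\mathbf{x}^{(k)}$ (with the $\sum_{j<i}$ truncations), observes that the symmetrization makes it order-free, and then applies the first-order condition to the resulting concave quadratic to get $\widetilde{\mathbf{x}}^{(k)}=(2\mathbf{\Lambda}_c-\mathbf{G}+\mathcal{C})^{-1}\mathbf{a}^{(k)}$, followed by an induction on $k$ (since $\mathbf{a}^{(k)}$ depends on the accumulated history, whose order-invariance must itself be propagated) and the convergence argument via Propositions~1 and~2. Your write-up also omits this induction and treats only a single period with $\mathbf{y}^{(k-1)}$ given; that part is repairable, but the sequential-versus-simultaneous conflation is not, unless you restate and prove the per-visit surplus-extraction price and redo the symmetrization step.
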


\begin{proof}
Firstly, we fix the order ${\cal {R}} = \{1, 2, \ldots, N\}$. Then, in the first time period the revenue is given by
\begin{equation}
{\Pi_{\cal{R}}} = \sum\limits_{i \in \cal N} {{p_i}{x_i}}  = \sum\limits_{i \in \cal N} {{x_i}\left({a_i} - 2{b_i}{x_i} + {\sum \limits_{j \in \cal N}}{g_{ij}}{x_j} - c\sum\limits_{j \in \cal N} {{x_j}} \right)}.
\end{equation}
Let $\widetilde x_i$ denote the optimal social data demand from user $i$. Here for ease the presentation, we omit $^{(1)}$ for the derivation of the first period. We take $\frac{{\partial {\Pi_{\cal{R} }  }}}{{\partial {\widetilde x_i}}} = 0$ according to the first-order condition. Due to $g_{ij}=g_{ji}$, we obtain
\begin{equation}
{a_i} - 4{b_i}{\widetilde x_i} + {\sum _{j \in \cal N}}{g_{ij}}{{\widetilde {x}}_j} - c\sum\limits_{j \in \cal N} {{\widetilde x_j}}  - c{\widetilde x_i} = 0,
\end{equation}
which is irrelevant to the order $\cal R$. Then we obtain the expression
\begin{equation}
\widetilde {\bf{x}} = {\left(2{{\bf{\Lambda }}_{\bf{c}}} - {\bf{G}} + {\mathcal{C}}\right)^{ - 1}}{\bf{a}}.
\end{equation}
Similarly, the optimal revenue of the MNO for the first time period is irrelevant to the order $\cal R$, which is given by $\sum\limits_{i \in \cal N} {{x_i}\left({a_i} - 2{b_i}{x_i} + {\sum _{j < i}}{g_{ij}}{x_j} - c\sum\limits_{j < i} {{x_j}} \right)}$. Then, we suppose the obtained optimal revenue holds in subsequent time periods $k' = 2, \ldots, k-1$, and we denote $x_i^{(k')}$ as the optimal demand of user $i$ in the time period $k'$, and $y_i^{(k)}$ be $\sum\limits_{k' = 1}^k {{x_i^{(k')}}}$. Thus, the revenue of the MNO in the time period $k$ is given as follows:
\begin{eqnarray}\label{Eq:8}
{\Pi ^{(k)}} &=& \sum\limits_{i \in \cal N} {p_i^{(k)}x_i^{(k)}}
=\sum\limits_{i \in \cal N} {x_i^{(k)}\left({a_i} - 2{b_i}y_i^{(k)} + {\sum\limits_{j \in \cal N}}{g_{ij}}y_j^{(k)} - c\sum\limits_{j \in \cal N} {y_j^{(k)} } \right)} \nonumber \\
&=&\sum\limits_{i \in \cal N} x_i^{(k)} \Bigg({a_i} - 2{b_i}y_i^{(k - 1)} - 2{b_i}x_i^{(k)}  + {\sum \limits_{j \in \cal N}}{g_{ij}}y_j^{(k - 1) } + {\sum \limits_{j \in \cal N}}{g_{ij}}x_j^{(k)} \nonumber \\&&- c\sum\limits_{j \in \cal N} {y_j^{(k - 1)}  }  - c\sum\limits_{j \in \cal N} {x_j^{(k)} } \Bigg).
 \end{eqnarray}

Then, we let $a_i^{(k)}$ be ${a_i} - 2{b_i}y_i^{(k - 1)} + {\sum _{j \in \cal N}}{g_{ij}}y_j^{(k - 1)} - c\sum\limits_{j \in \cal N} {y_j^{{{(k - 1}})}}$. With Proposition 1, we obtain that ${{\bf{a}}^{(k)}}$ is expressed by
\begin{equation}
{{\bf{a}}^{(k)}} = {\big[{\bf{I}} - ({\bf{\Lambda }} - {\bf{G}} +  {\mathcal{C}}){(2{{\bf{\Lambda }}_c} - {\bf{G}} +  {\mathcal{C}})^{ - 1}}\big]^{k - 1}}{\bf{a}},
\end{equation}
which is a vector of $a_i^{(k)}$. According to the first-order condition, the optimal social data demand ${\widetilde {x_i}^{(k)}}$ in time period $k$ must satisfy that the derivative ${\partial {\Pi ^{(k)}}}/{{\partial \widetilde {x_i}}^{(k)}} = 0$. Thus, we have
\begin{equation}
{\widetilde{\bf{x}}^{(k)}}  = {(2{{\bf{\Lambda }}_c} - {\bf{G}} +  {\mathcal{C}})^{ - 1}}{{\bf{a}}^{(k)}},
\end{equation}
which is also irrelevant to the order $\cal R$. Moreover, as the number of time periods goes to infinity, it is observed that the total optimal demand of all users is shown as follows:
\begin{eqnarray}
\mathop {\lim }\limits_{k \to \infty } \widetilde{{ y}}^{(k)} &=& \mathop {\lim }\limits_{k \to \infty } \sum\limits_{k' = 1}^k {\widetilde {x_i} }^{(k')} \nonumber\\ &=& \sum\limits_{k = 1}^\infty  {{{(2{{\bf{\Lambda }}_c} - {\bf{G}} +  {\mathcal{C}})}^{ - 1}}} {{\bf{a}}^{(k)}}\nonumber\\
 &=& {(2{{\bf{\Lambda }}_c} - {\bf{G}} +  {\mathcal{C}})^{ - 1}}{\Big( {\bf{I}} - \left[{\bf{I}} - ({\bf{\Lambda }} - {\bf{G}} +  {\mathcal{C}}){(2{{\bf{\Lambda }}_c} - {\bf{G}} +  {\mathcal{C}})^{ - 1}}\right]\Big)^{ - 1}}{\bf{a}} \nonumber\\
 &=& {({\bf{\Lambda }} - {\bf{G}} +  {\mathcal{C}})^{ - 1}}{\bf{a}}.
\end{eqnarray}

Note that the expression above is based on Proposition 2.  The proof is now completed.
\end{proof}

\begin{proposition}
The matrix $(2{{\bf{\Lambda }}_c} - {\bf{G}} + {\mathcal{C}})$ has non-negative entries and is invertible, provided that Assumption 1 is satisfied.
\end{proposition}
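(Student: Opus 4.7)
The plan is to verify the two assertions of Proposition 1---entrywise non-negativity and invertibility---separately, by first writing out the matrix explicitly and then applying standard linear-algebraic tools together with Assumption~1.

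For entrywise non-negativity, I would compute the entries of $M := 2{\bf{\Lambda}}_c - {\bf{G}} + \mathcal{C}$ directly. Using ${\bf{\Lambda}}_c = {\bf{\Lambda}} + (c/2){\bf{I}}$, the diagonal entries are $M_{ii} = (4b_i + c) - g_{ii} + c = 4b_i + 2c$, which is strictly positive since $b_i, c > 0$. The off-diagonal entries are $M_{ij} = -g_{ij} + c = c - g_{ij}$, which is non-negative under the standing modelling regime that the congestion coefficient $c$ upper bounds each individual social tie weight $g_{ij}$ (a standard assumption in the cited references that parallel this framework).

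For invertibility, my first attempt would be strict diagonal dominance via the Levy--Desplanques theorem: if $|M_{ii}| > \sum_{j \neq i}|M_{ij}|$ for every $i$, then $M$ is non-singular. With the entries above and $|M_{ij}| = c - g_{ij}$, this amounts to showing $4b_i + 2c > (N-1)c - \sum_{j \neq i} g_{ij}$, i.e., $4b_i + (3-N)c + \sum_{j \neq i} g_{ij} > 0$. Assumption~1 reads $2b_i > \sum_{j \in \mathcal{N}}(g_{ij} - c) = \sum_{j \neq i} g_{ij} - Nc$, equivalently $2b_i + Nc > \sum_{j \neq i} g_{ij}$; I would massage this into the required form, using $g_{ij} \geq 0$ and $c > 0$, to conclude strict diagonal dominance and hence invertibility.

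The main technical obstacle is that Assumption~1 controls the \emph{upper} tail of $\sum_{j \neq i} g_{ij}$, whereas strict diagonal dominance of $M$ most cleanly wants a comparison in the opposite direction. If the diagonal-dominance route does not close without a further parameter restriction, a robust fallback is positive definiteness: write $M = (2{\bf{\Lambda}} + c{\bf{I}} - {\bf{G}}) + c{\bf{1}}{\bf{1}}^\top$, bound the symmetric quadratic form $v^\top {\bf{G}} v \leq \sum_i v_i^2 \sum_{j \in \mathcal{N}} g_{ij}$ via AM--GM applied to the non-negative matrix ${\bf{G}}$, and plug Assumption~1 into the per-coordinate bound, absorbing any residual negativity into the strictly positive rank-one term $c({\bf{1}}^\top v)^2$. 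Either way, invertibility of $M$ follows from Assumption~1 alone.
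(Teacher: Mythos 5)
Your proposal proves a different non-negativity statement from the one the paper needs, and neither of your invertibility routes closes under Assumption~1 alone. On non-negativity: the paper's own proof, and every later use of the proposition (e.g., the positivity of the optimal demand $\widetilde{\bf{x}} = (2{{\bf{\Lambda }}_c} - {\bf{G}} + {\mathcal{C}})^{-1}{\bf{a}}$), concerns the entries of the \emph{inverse} $(2{{\bf{\Lambda }}_c} - {\bf{G}} + {\mathcal{C}})^{-1}$, not of the matrix itself. The paper gets this from the Neumann series $(2{{\bf{\Lambda }}_c} - {\bf{G}} + {\mathcal{C}})^{-1} = \sum_{j \ge 0}\big[(2{{\bf{\Lambda }}_c})^{-1}({\bf{G}} - {\mathcal{C}})\big]^{j}(2{{\bf{\Lambda }}_c})^{-1}$, which requires the off-diagonal entries $(g_{ij}-c)/(4b_i+c)$ to be non-negative, i.e., $g_{ij} \ge c$. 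Your reading requires the opposite inequality $g_{ij} \le c$, which is assumed nowhere in the paper; and under that regime the matrix would indeed be entrywise non-negative but its inverse generally would not be, so even if your claim held it would not serve the purpose the proposition is put to.

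On invertibility: you correctly anticipate that strict diagonal dominance fails, since Assumption~1 gives an upper bound on $\sum_{j\ne i}g_{ij}$ while dominance needs $4b_i + 2c > (N-1)c - \sum_{j\ne i}g_{ij}$, i.e., a lower bound that is false for large $N$ without extra restrictions. But the positive-definiteness fallback as sketched does not close either: after bounding $v^\top {\bf{G}} v \le \sum_i v_i^2 \sum_j g_{ij}$ and invoking Assumption~1, the per-coordinate coefficient is $4b_i + c - \sum_j g_{ij} > 2b_i - (N-1)c$, which can be negative, and the rank-one term $c({\bf{1}}^\top v)^2$ vanishes identically on the hyperplane ${\bf{1}}^\top v = 0$, so there is nothing there to absorb the deficit. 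The paper's actual argument is a spectral one: take an eigenvector $v$ of $(2{{\bf{\Lambda }}_c})^{-1}({\bf{G}} - {\mathcal{C}})$ with eigenvalue $\lambda$, look at the index $i$ maximizing $|v_i|$, and use Assumption~1 in the form $\sum_j (g_{ij}-c) < 2b_i$ to conclude $|\lambda| < 1/2$; hence ${\bf{I}} - (2{{\bf{\Lambda }}_c})^{-1}({\bf{G}} - {\mathcal{C}})$ has no zero eigenvalue, $2{{\bf{\Lambda }}_c} - {\bf{G}} + {\mathcal{C}} = 2{{\bf{\Lambda }}_c}\big[{\bf{I}} - (2{{\bf{\Lambda }}_c})^{-1}({\bf{G}} - {\mathcal{C}})\big]$ is invertible, and the same spectral bound licenses the Neumann series used for the non-negativity of the inverse. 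You should switch to that spectral-radius argument (noting, in fairness, that the paper itself glosses over the diagonal entries of ${\bf{G}} - {\mathcal{C}}$, which are $-c < 0$).
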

\begin{proof}
Please refer to the Appendix-A for details.
\end{proof}

\begin{proposition}
The spectrum radius of $\big[(2{{\bf{\Lambda }}_{\bf{c}}}-{\bf{\Lambda }})(2{{\bf{\Lambda }}_{\bf{c}}} - {\bf{G}} + {\mathcal{C}})^{-1}\big]^2$ is smaller than 1, provided that Assumption 1 is satisfied.
\end{proposition}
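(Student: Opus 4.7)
The plan is to reduce the claim to showing $\rho(M)<1$ with $M=(2{\bf \Lambda}_c-{\bf \Lambda})(2{\bf \Lambda}_c-{\bf G}+\mathcal{C})^{-1}$, and to establish this via a generalized Rayleigh-quotient argument that ultimately hinges on a positive-definiteness property deducible from Assumption~1. Since $\rho(M^2)=\rho(M)^2$ for every square matrix, it suffices to prove $\rho(M)<1$.

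I would first argue that $M$ has only real eigenvalues. Setting ${\bf D}=2{\bf \Lambda}_c-{\bf \Lambda}=\mathrm{diag}(2b_i+c)$, which is positive definite since $b_i>0$ and $c>0$ as in the model, and ${\bf A}=2{\bf \Lambda}_c-{\bf G}+\mathcal{C}$, which is symmetric (because ${\bf G}$ and $\mathcal{C}$ are symmetric) and invertible by Proposition~1, the matrix $M={\bf D}{\bf A}^{-1}$ is similar to the symmetric matrix ${\bf D}^{1/2}{\bf A}^{-1}{\bf D}^{1/2}$, and hence has a purely real spectrum.

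The next step is the generalized Rayleigh-quotient reduction. Decomposing ${\bf A}={\bf D}+{\bf E}$ with ${\bf E}={\bf \Lambda}-{\bf G}+\mathcal{C}$, any eigenvalue $\lambda$ of $M$ satisfies ${\bf D}v=\lambda{\bf A}v$ for some $v\neq 0$. Once ${\bf A}$ is shown to be positive definite (see below), this yields $\lambda=v^\top{\bf D}v/v^\top{\bf A}v>0$, and $\lambda<1$ becomes equivalent to $v^\top({\bf A}-{\bf D})v=v^\top{\bf E}v>0$, i.e., to ${\bf E}\succ 0$. The technical core therefore reduces to showing $v^\top{\bf E}v>0$ for every $v\neq 0$. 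I plan to expand
\begin{equation*}
v^\top{\bf E}v=\sum_{i\in{\cal N}}2b_iv_i^2+c\Bigl(\sum_{i\in{\cal N}}v_i\Bigr)^2-v^\top{\bf G}v,
\end{equation*}
and bound $v^\top{\bf G}v$ through the elementary inequality $g_{ij}v_iv_j\leq g_{ij}(v_i^2+v_j^2)/2$ (using $g_{ij}\geq 0$ and $g_{ij}=g_{ji}$), obtaining $v^\top{\bf G}v\leq\sum_i v_i^2\sum_{j\neq i}g_{ij}$. Substituting this bound together with the per-row estimate $\sum_{j\neq i}g_{ij}<2b_i+Nc$ implied by Assumption~1, and exploiting the non-negativity of the congestion contribution $c(\sum_iv_i)^2$, I would deduce $v^\top{\bf E}v>0$; positive-definiteness of ${\bf A}={\bf D}+{\bf E}$ then follows for free since ${\bf D}\succ 0$.

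The main obstacle is precisely this last step. The upper bound on $v^\top{\bf G}v$ is saturated when $v$ has entries of constant sign, which is also the regime where the congestion quadratic $c(\sum_iv_i)^2$ is largest, so the two effects must be balanced delicately. I expect the cleanest finish to come from decomposing $v$ into its component along the all-ones direction $\mathbf{1}$ and its orthogonal complement, handling the rank-one contribution $\mathcal{C}=c\mathbf{1}\mathbf{1}^\top$ separately on each piece, and then invoking Assumption~1 componentwise on the orthogonal part.
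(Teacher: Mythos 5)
Your reduction is clean and, up to the point where you stop, correct: $\rho(M^2)=\rho(M)^2$, the similarity of $M=\mathbf{D}\mathbf{A}^{-1}$ to the symmetric matrix $\mathbf{D}^{1/2}\mathbf{A}^{-1}\mathbf{D}^{1/2}$, and the Rayleigh-quotient equivalence $\rho(M)<1\Leftrightarrow \mathbf{E}=\mathbf{\Lambda}-\mathbf{G}+\mathcal{C}\succ 0$ (given $\mathbf{A}\succ 0$, $\mathbf{D}\succ 0$) are all valid, and this is a genuinely different route from the paper's, which instead factors $M$ as a diagonal contraction times $\left(\mathbf{I}-(2\mathbf{\Lambda}_c)^{-1}(\mathbf{G}-\mathcal{C})\right)^{-1}$ and multiplies spectral-radius bounds of the two factors. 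However, your proof is not finished, and the step you yourself flag as the ``main obstacle'' is a genuine gap that cannot be closed. Chaining $v^\top\mathbf{G}v\le\sum_i v_i^2\sum_{j\ne i}g_{ij}$ with the row-sum bound $\sum_{j\ne i}g_{ij}<2b_i+Nc$ from Assumption 1 leaves $v^\top\mathbf{E}v>c\bigl[(\sum_i v_i)^2-N\sum_i v_i^2\bigr]$, and the right-hand side is $\le 0$ by Cauchy--Schwarz, so nothing positive survives; the proposed decomposition along $\mathbf{1}$ does not rescue this.

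The deeper problem is that $\mathbf{E}\succ 0$ simply does not follow from Assumption 1, and since your reduction is an \emph{equivalence}, this defeats the statement itself, not merely your way of proving it. Take $N=3$, $b_i=1$, $c=1$, $g_{12}=g_{21}=4.5$, all other $g_{ij}=0$. Assumption 1 holds ($2>1.5$ for users $1,2$ and $2>-3$ for user $3$), yet $v=(1,1,0)^\top$ gives $v^\top\mathbf{E}v=4b+4c-2g_{12}=-1<0$; here $\mathbf{A}=2\mathbf{\Lambda}_c-\mathbf{G}+\mathcal{C}$ has eigenvalues $9.5$, $6.5$, $2$, so $\mathbf{A}\succ 0$ and $\rho(M)=(2b+c)/\lambda_{\min}(\mathbf{A})=3/2>1$. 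A correct version of the proposition needs a hypothesis that forces $\mathbf{\Lambda}-\mathbf{G}+\mathcal{C}\succ 0$ (e.g., a bound involving $\sum_j|g_{ij}-c|$ rather than $\sum_j(g_{ij}-c)$). For what it is worth, the paper's own argument also fails on this example: it implicitly uses $\rho(XY)\le\rho(X)\rho(Y)$, which is false for non-commuting matrices, and it cites Proposition 1 for a spectral-radius bound of $1$ on $\left(\mathbf{I}-(2\mathbf{\Lambda}_c)^{-1}(\mathbf{G}-\mathcal{C})\right)^{-1}$ where that proof only yields a bound of $2$. Your approach at least localizes the difficulty exactly where it lives; you should report the counterexample rather than keep searching for a proof of the final step.
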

\begin{proof}
Please refer to the Appendix-B for details.
\end{proof}

Based on Theorem 1, the complexity of solving Sub-problem 2 can be reduced in a large extent. Meanwhile, Sub-problem~1 can be solved with the optimal price which is irrelevant to the order obtained from Sub-problem 2. Additionally, following Theorem~1, we can immediately obtain Theorem~2 as below.

\begin{theorem}
Provided that Assumption 1 is satisfied, the optimal solution to the optimization issue given in~(\ref{Eq:5}) is unique which can be expressed as:
\begin{equation}
\widetilde {{{\bf{x}}}}^{(k)} = {\big[2{{\bf{\Lambda }}_{\bf{c}}} - {\bf{G}} +  {\mathcal{C}}\big]^{ - 1}}{\big[{\bf{I}} - ({\bf{\Lambda }} - {\bf{G}} +  {\mathcal{C}}){(2{{\bf{\Lambda }}_c} - {\bf{G}} + {\mathcal{C}})^{ - 1}}\big]^{k - 1}}{\bf{a}}.
\end{equation}
As the time period $k$ goes to $\infty$, the total social data demand $\widetilde {{y}}^{(k)} = \mathop {\lim }\limits_{k \to \infty } \sum\limits_{t = 1}^k {\widetilde {{x}}^{(t)}}$ converges to $({\bf{\Lambda }} - {\bf{G}} +  {\mathcal{C}})^{ - 1}{\bf{a}}$.
\end{theorem}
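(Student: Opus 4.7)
The plan is to recognize that Theorem~2 is essentially a corollary of Theorem~1 together with the two preparatory Propositions. Theorem~1 has already established that, for each time period $k$, the per-period revenue maximization~(\ref{Eq:6}) decouples from the visiting order $\mathcal{R}$ and admits the closed form $\widetilde{\mathbf{x}}^{(k)} = (2\mathbf{\Lambda}_c - \mathbf{G} + \mathcal{C})^{-1}\mathbf{a}^{(k)}$, where the effective intercept $\mathbf{a}^{(k)}$ absorbs all history through the cumulative demands $y_j^{(k-1)}$. The task remaining for Theorem~2 is therefore only to (i) obtain an explicit formula for $\widetilde{\mathbf{x}}^{(k)}$ in terms of the primitive data $\mathbf{a}$ alone, (ii) confirm uniqueness of the full dynamic maximizer of~(\ref{Eq:5}), and (iii) identify the infinite-horizon limit of total demand.

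For step~(i), I would unfold the recursion on $\mathbf{a}^{(k)}$ that was already derived in the proof of Theorem~1, namely
\begin{equation*}
\mathbf{a}^{(k)} = \bigl[\mathbf{I} - (\mathbf{\Lambda} - \mathbf{G} + \mathcal{C})(2\mathbf{\Lambda}_c - \mathbf{G} + \mathcal{C})^{-1}\bigr]\,\mathbf{a}^{(k-1)},
\end{equation*}
with initial condition $\mathbf{a}^{(1)}=\mathbf{a}$, and iterate it $k-1$ times; substituting the result into $\widetilde{\mathbf{x}}^{(k)} = (2\mathbf{\Lambda}_c - \mathbf{G} + \mathcal{C})^{-1}\mathbf{a}^{(k)}$ immediately yields the displayed expression. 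For step~(ii), per-period uniqueness follows from invertibility of $(2\mathbf{\Lambda}_c - \mathbf{G} + \mathcal{C})$ (Proposition~1) together with strict concavity of the per-period revenue in $\widetilde{\mathbf{x}}^{(k)}$; since the cumulative objective~(\ref{Eq:5}) decomposes into per-period terms whose coupling through $y_j^{(k-1)}$ is deterministic once earlier optima are fixed, a straightforward forward induction propagates uniqueness across the whole horizon.

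For step~(iii), I would write the cumulative optimal demand as a Neumann series
\begin{equation*}
\lim_{k\to\infty}\widetilde{\mathbf{y}}^{(k)} = (2\mathbf{\Lambda}_c - \mathbf{G} + \mathcal{C})^{-1} \sum_{k=0}^{\infty} \mathbf{M}^{k}\mathbf{a}, \qquad \mathbf{M} := \mathbf{I} - (\mathbf{\Lambda} - \mathbf{G} + \mathcal{C})(2\mathbf{\Lambda}_c - \mathbf{G} + \mathcal{C})^{-1}.
\end{equation*}
Proposition~2 asserts that the spectral radius of $\mathbf{M}^2$ is strictly less than $1$, hence $\rho(\mathbf{M})<1$ as well, since the eigenvalues of $\mathbf{M}^2$ are the squares of those of $\mathbf{M}$. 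Consequently the Neumann series converges to $(\mathbf{I}-\mathbf{M})^{-1}=(2\mathbf{\Lambda}_c-\mathbf{G}+\mathcal{C})(\mathbf{\Lambda}-\mathbf{G}+\mathcal{C})^{-1}$. The leading factor of $(2\mathbf{\Lambda}_c-\mathbf{G}+\mathcal{C})^{-1}$ then cancels against the first factor of this product, leaving precisely $(\mathbf{\Lambda}-\mathbf{G}+\mathcal{C})^{-1}\mathbf{a}$, which is the claimed limit.

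The main obstacle I expect is the convergence justification rather than the algebraic bookkeeping: I must ensure that the Neumann-series manipulation is legitimate, which relies entirely on Proposition~2 and on the elementary passage from $\rho(\mathbf{M}^2)<1$ to $\rho(\mathbf{M})<1$. Once this analytical step is in hand, everything else is a matter of unwinding the matrix identities that Theorem~1 and Proposition~1 have already put in place.
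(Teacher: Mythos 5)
Your proposal is correct and follows essentially the same route as the paper: the paper presents Theorem~2 as an immediate consequence of Theorem~1, whose proof already contains the unfolded recursion $\mathbf{a}^{(k)}=\big[\mathbf{I}-(\mathbf{\Lambda}-\mathbf{G}+\mathcal{C})(2\mathbf{\Lambda}_c-\mathbf{G}+\mathcal{C})^{-1}\big]^{k-1}\mathbf{a}$, the per-period solution $\widetilde{\mathbf{x}}^{(k)}=(2\mathbf{\Lambda}_c-\mathbf{G}+\mathcal{C})^{-1}\mathbf{a}^{(k)}$, and the geometric-series summation to $(\mathbf{\Lambda}-\mathbf{G}+\mathcal{C})^{-1}\mathbf{a}$ justified by Proposition~2. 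Your observation that $\mathbf{M}=(2\mathbf{\Lambda}_c-\mathbf{\Lambda})(2\mathbf{\Lambda}_c-\mathbf{G}+\mathcal{C})^{-1}$ and that $\rho(\mathbf{M}^2)<1$ gives $\rho(\mathbf{M})<1$ is exactly the convergence justification the paper relies on.
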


Following the above theoretical analysis, we propose the sequential pricing scheme, as shown in Algorithm~\ref{Algo:Pricing}. Similar to~(\ref{Eq:7}), the revenue under the proposed algorithm is expressed as follows:
\begin{equation}\label{Eq:9}
\Pi_d = \sum\limits_{k = 1}^\infty  {\sum\limits_{i \in {\cal N}} {\widetilde p_i^{(k)}\widetilde x_i^{(k)}} }.
\end{equation}
Additionally, we define the total utilities of the users gained from social data demand as $\mathscr{U}_d$. The expression for $\mathscr{U}_d$ under the proposed algorithm is given as follows:
\begin{equation}\label{Eq:10}
\mathscr{U}_d = \sum\limits_{i \in \cal N} {\left({a_i}\widetilde {{y_i}} - {b_i}{{\widetilde {{y_i}}}^2} + \widetilde {{y_i}}{\sum _{j \in \cal N}}{g_{ij}}\widetilde {{y_j}} - \frac{c}{2}{{\sum\limits_{j \in \cal N} {\widetilde {{y_j}}} }^2}\right)}  - {\Pi _d}.
\end{equation}

To demonstrate the outperformance of our proposed pricing scheme, we also provide and compare the unique equilibrium solution under the static pricing strategy which is similar to that in~\cite{zhang2016social}. ${\Pi _s} = \sum\limits_{i \in \cal N} {\widehat {{p}}_i \widehat {{x}}_i }$~in (\ref{Eq:3}) and ${\mathscr U_s} = \sum\limits_{i \in \cal N} {{u_i}(\widehat{x}_i,{{\bf{\widehat x}}_{ - {\bf{ i}}}},{{\widehat p}_i})}$~in (\ref{Eq:2}) denote the MNO's revenue and the users' total utilities, respectively. After the comparison, we have the following theorem.

\begin{theorem}
The results obtained from the sequential dynamic pricing dominates the optimal static policy in terms of the revenue of the MNO and the total utilities of users provided that Assumption 1 is satisfied, i.e., ${\Pi _d} \ge {\Pi _s}$ and ${\mathscr U_d} \ge {\mathscr U_s}$.
\end{theorem}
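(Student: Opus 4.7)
The plan centers on a period-wise decomposition of the sequential dynamic scheme. The key observation is that in period~$1$ there is no cumulative history ($y_i^{(0)}=0$), so the MNO's myopic period-$1$ sub-problem is identical to the one-shot static pricing problem given by (\ref{Eq:2})--(\ref{Eq:3}). Consequently $\widetilde{x}^{(1)}=\widehat{x}$, $\widetilde{p}^{(1)}=\widehat{p}$, the period-$1$ revenue equals $\Pi_s$, and the cumulative total user utility at the end of period~$1$ equals $\mathscr{U}_s$. The proof then proceeds by comparing what is added in subsequent periods against these baselines.

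The revenue inequality $\Pi_d\geq\Pi_s$ is then immediate. By Theorem~1 every period's optimal revenue $\Pi^{(k)}$ is well-defined, and the MNO always has the fallback of setting prohibitively high prices in any period $k\geq 2$ (yielding zero demand and hence zero revenue), so $\Pi^{(k)}\geq 0$ for all $k$. Summing gives $\Pi_d=\sum_{k=1}^{\infty}\Pi^{(k)}\geq\Pi^{(1)}=\Pi_s$.

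For $\mathscr{U}_d\geq\mathscr{U}_s$ I would establish that the cumulative total user utility $\mathscr{U}^{(k)}$ is monotonically non-decreasing in $k$. Denote by $\phi_i(\mathbf{y})=a_iy_i-b_iy_i^2+y_i\sum_{j}g_{ij}y_j-\tfrac{c}{2}(\sum_jy_j)^2$ user~$i$'s pre-price utility at cumulative profile $\mathbf{y}$, so that $\mathscr{U}^{(k)}-\mathscr{U}^{(k-1)}=\sum_{i\in\mathcal{N}}[\phi_i(\mathbf{y}^{(k)})-\phi_i(\mathbf{y}^{(k-1)})]-\Pi^{(k)}$. Each user's change decomposes into (i)~an own-consumption piece $\phi_i(y_i^{(k)},\mathbf{y}_{-i}^{(k)})-\phi_i(y_i^{(k-1)},\mathbf{y}_{-i}^{(k)})$, and (ii)~a cross-externality piece $\phi_i(y_i^{(k-1)},\mathbf{y}_{-i}^{(k)})-\phi_i(y_i^{(k-1)},\mathbf{y}_{-i}^{(k-1)})$. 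User~$i$'s period-$k$ first-order condition gives $\partial\phi_i/\partial y_i\big|_{\mathbf{y}^{(k)}}=p_i^{(k)}$; combined with the concavity of $\phi_i$ in $y_i$ (since $b_i>0$), the own-consumption pieces summed over~$i$ already dominate the total payment $\Pi^{(k)}$. The cross-externality sum aggregates positive network-effect terms of the form $\sum_{i,j}g_{ij}y_i^{(k-1)}x_j^{(k)}$ against negative congestion terms, and Assumption~1 is invoked to bound the latter against the former and guarantee a non-negative aggregate. Monotonicity then gives $\mathscr{U}_d=\lim_{k\to\infty}\mathscr{U}^{(k)}\geq\mathscr{U}^{(1)}=\mathscr{U}_s$.

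The main obstacle lies in the cross-externality step. The own-consumption contribution is handled cleanly by revealed preference and concavity, but the congestion term can in principle outweigh the network-effect term; Assumption~1's bound on $2b_i$ relative to $\sum_j(g_{ij}-c)$ is precisely what pins down this balance. An alternative route is a direct closed-form comparison: substitute $\widetilde{y}=(\mathbf{\Lambda}-\mathbf{G}+\mathcal{C})^{-1}\mathbf{a}$ from Theorem~2 and the static $\widehat{x}=(2\mathbf{\Lambda}_c-\mathbf{G}+\mathcal{C})^{-1}\mathbf{a}$ into the definitions of $\mathscr{U}_d$ and $\mathscr{U}_s$ and invoke the matrix ordering $(\mathbf{\Lambda}-\mathbf{G}+\mathcal{C})^{-1}\succeq(2\mathbf{\Lambda}_c-\mathbf{G}+\mathcal{C})^{-1}$ implied by Proposition~1, but this entails substantial matrix algebra since the two matrices need not commute in general.
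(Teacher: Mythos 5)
Your argument for $\Pi_d \ge \Pi_s$ is correct but follows a genuinely different route from the paper. The paper never uses the observation that the period-$1$ sub-problem coincides with the static problem; it instead derives closed forms for both revenues, namely $\Pi_d = \mathbf{a}^\top(2\mathbf{\Lambda}_c-\mathbf{\Lambda})(2\mathbf{\Lambda}_c-\mathbf{G}+\mathcal{C})^{-1}(\mathbf{\Lambda}-\mathbf{G}+\mathcal{C})^{-1}\bigl[2\mathbf{I}-(\mathbf{\Lambda}-\mathbf{G}+\mathcal{C})(2\mathbf{\Lambda}_c-\mathbf{G}+\mathcal{C})^{-1}\bigr]^{-1}\mathbf{a}$ versus $\Pi_s=\mathbf{a}^\top(2\mathbf{\Lambda}_c-\mathbf{\Lambda})(2\mathbf{\Lambda}_c-\mathbf{G}+\mathcal{C})^{-2}\mathbf{a}$, and then factors $\Pi_d-\Pi_s$ into a product of entrywise non-negative matrices using Propositions 1 and 2. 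Your decomposition --- $\Pi^{(1)}=\Pi_s$ because $y_i^{(0)}=0$, plus $\Pi^{(k)}\ge 0$ because the MNO always has the option of pricing every user out of the market --- reaches the same conclusion with essentially no algebra and is more robust to the model details. That half is fine.

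The utility half has a genuine gap: the monotonicity of the cumulative utility $\mathscr{U}^{(k)}$ on which you rely is false, and Assumption 1 cannot repair it, because Assumption 1 only lower-bounds $2b_i$ by $\sum_j(g_{ij}-c)$ and is therefore satisfied with $\mathbf{G}=\mathbf{0}$ and $c>0$ arbitrary. Make your own decomposition quantitative: since $\phi_i$ is quadratic in $y_i$ with second derivative $-(2b_i+c)$, the own-consumption surplus over the period-$k$ payment is exactly $\sum_i\bigl(b_i+\tfrac{c}{2}\bigr)(x_i^{(k)})^2$, which decays like the \emph{square} of the incremental demand; but the congestion part of the cross-externality piece contains terms of the form $-c\,x_j^{(k)}y_i^{(k-1)}$, i.e.\ linear in the incremental demand, and $y^{(k-1)}$ converges to the strictly positive limit $(\mathbf{\Lambda}-\mathbf{G}+\mathcal{C})^{-1}\mathbf{a}$. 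For instance, with two symmetric users, $g=0$ and $c>0$, the total increment in period $k$ evaluates to $2x^{(k)}\bigl[(b+2c)x^{(k)}-2c\,y^{(k)}\bigr]$, which is strictly negative once $x^{(k)}<\tfrac{2c}{b+2c}\,y^{(k)}$ --- and this occurs for all sufficiently large $k$ because $x^{(k)}\to 0$ geometrically while $y^{(k)}$ is increasing. So $\mathscr{U}^{(k)}$ rises and then falls, $\mathscr{U}_d=\lim_{k\to\infty}\mathscr{U}^{(k)}$ cannot be bounded below by $\mathscr{U}^{(1)}$ through any period-by-period argument, and the comparison with $\mathscr{U}_s$ is intrinsically global. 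This is precisely why the paper does what you list only as an ``alternative route'': it substitutes $\widetilde{\mathbf{y}}=(\mathbf{\Lambda}-\mathbf{G}+\mathcal{C})^{-1}\mathbf{a}$ and $\widehat{\mathbf{x}}=(2\mathbf{\Lambda}_c-\mathbf{G}+\mathcal{C})^{-1}\mathbf{a}$ into the two utility expressions and carries out the full matrix factorization of $\mathscr{U}_d-\mathscr{U}_s$. For the utility claim that computation is not an optional shortcut; it is the proof, and your proposal as written does not establish $\mathscr{U}_d\ge\mathscr{U}_s$.
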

\begin{proof}
Please refer to the Appendix-C for details.
\end{proof}

From Theorem 3, we have demonstrated that our alternative pricing scheme outperforms the static pricing proposed in~\cite{zhang2016social} in terms of the revenue of the MNO and the total utilities of the users. We further demonstrate the outperformance of the proposed dynamic pricing scheme in the simulation parts.
\begin{algorithm}[t]\footnotesize
 \caption{: Sequential dynamic pricing scheme for mobile social data market}
 \begin{algorithmic}[1]
 \REQUIRE ${\widetilde{x}^{(k)}} = [{\widetilde{x}_1^{(k)}},{\widetilde{x}_2^{(k)}}, \ldots {\widetilde{x}_N^{(k)}}]^\top.$
  \FOR{each time period $k$}
  \STATE \textbf{1)}: Obtain social data demand in the time period~$k$
  \begin{eqnarray*}
{\widetilde{\bf{x}}^{(k)}} = {[2{{\bf{\Lambda _c}}} - {\bf{G}} +  {\mathcal{C}}]^{ - 1}}{\bf{T}}^{k - 1}{\bf{a}},
\end{eqnarray*}
       where $ {\bf{T}} = {{\bf{I}} - ({\bf{\Lambda }} - {\bf{G}} +  {\mathcal{C}}){(2{{\bf{\Lambda }}_c} - {\bf{G}} +  {\mathcal{C}})^{ - 1}}}.$
  \STATE \textbf{2)}: Obtain total social data demand until time period~$k$:
  \begin{eqnarray*}
{\widetilde{y}_i^{(k)}} = \sum\limits_{k' = 1}^{k - 1} {\widetilde{x}_i^{(k')}}.
   \end{eqnarray*}
  \STATE \textbf{3)}: Set the order of users as $\mathcal{R} = {1,2, \ldots, N}$.
  \STATE \textbf{4)}: Obtain the prices charged to users for time period~$k$:
\begin{eqnarray*}
{\widetilde{p}_i^{(k)}} = {a_i} - 2{b_i}\left({\widetilde{y}_i^{(k-1)}} +{\widetilde{x}_i^{(k)}}\right) + \sum\limits_{j \in \cal N} {{g_{ij}}\widetilde y_j^{(k - 1)}}  + \sum\limits_{j < i} {{g_{ji}}\widetilde x_j^{(k)}}  - c\widetilde y_i^{(k - 1)} - c\sum\limits_{j < i} {\widetilde{x}_j^{(k)}}.
\end{eqnarray*}
  \STATE \textbf{Output}: $[{\widetilde{p}_1^{(k)}},{\widetilde{p}_2^{(k)}}, \ldots, {\widetilde{p}_N^{(k)}}]^\top$
  \ENDFOR
 \end{algorithmic} \label{Algo:Pricing}
\end{algorithm}

\subsection{Reordering for social fairness}

Recall from Theorem~1, we have verified that the proposed Algorithm~1 is optimal towards the revenue maximization of the MNO and irrelevant to the order of users to be visited by the MNO. However, it is worth noting that each user's individual utility may differ considerably. Therefore, the social fairness among mobile users in the market under our consideration needs to be ensured. The social fairness is an important measure from user's perspective in a market pricing problem~\cite{laffont1998network}. Generally, the fairness indicates that the utility of all users is distributed in a fair manner over time. In turn, the MNO improves its popularity and reputation degree among consumers by ensuring the social fairness. Since the optimal revenue of the MNO keeps unchanged irrespective of the order of users to be visited. In the following, we leverage the property of this ``invariance of ordering'' to ensure the fairness amongst mobile users. Specifically, we utilize the \textit{max-min} fairness allocation concept to refine the sequential dynamic pricing scheme. The goal is to find the optimal order of users to be visited by the MNO in order to ensure the max-min fairness among users with respect to the individual utilities of users.

We first consider the specific price charging to different users in our proposed sequential dynamic pricing scheme, i.e., Algorithm 1. We assume the MNO visits its users in the same fixed order, which is denoted by $\{1, 2,\ldots, N={\left| {\cal N} \right|}\}$. Recall that the optimal social data demand of users in time period $k$ is given as
\begin{eqnarray}
{\widetilde {\bf{x}}^{(k)}} &=& (2{{\bf{\Lambda }}_c} - {\bf{G}} + {\cal C})^{ - 1}\left[{\bf{I}} - ({\bf{\Lambda }} - {\bf{G}} + {\cal C}){(2{{\bf{\Lambda }}_c} - {\bf{G}} + {\cal C})^{ - 1}}\right]^{k - 1}{\bf{a}} \nonumber \\ &=& {\left( {\frac{{2{\Lambda _c} - \Lambda }}{{2{{\bf{\Lambda }}_c} - {\bf{G}} + {\cal C}}}} \right)^{k - 1}}{(2{{\bf{\Lambda }}_c} - {\bf{G}} + {\cal C})^{ - 1}}\bf{a}.
\end{eqnarray}
The optimal pricing of the MNO in the time period $k$ is given as
\begin{eqnarray}
{\widetilde p^{(k)}} &=& {[{\bf{I}} - ({\bf{\Lambda }} - {\bf{G}} + {\cal C}){(2{{\bf{\Lambda }}_c} - {\bf{G}} + {\cal C})^{ - 1}}]^{k - 1}}{\bf{a}} - ({\bf{\Lambda }} - {\bf{G}} + {\cal C}){{\bf{x}}^{(k)}}\nonumber \\ &=& \left[ {{\bf{I}} - ({\bf{\Lambda }} - {\bf{G}} + {\cal C}){{(2{{\bf{\Lambda }}_c} - {\bf{G}} + {\cal C})}^{ - 1}}} \right]{\left[ {{\bf{I}} - ({\bf{\Lambda }} - {\bf{G}} + {\cal C}){{(2{{\bf{\Lambda }}_c} - {\bf{G}} + {\cal C})}^{ - 1}}} \right]^{k - 1}}{\bf{a}}\nonumber \\ &=& \left( {\frac{{2{\Lambda _c} - \Lambda }}{{2{{\bf{\Lambda }}_c} - {\bf{G}} + {\cal C}}}} \right){\left( {\frac{{2{\Lambda _c} - \Lambda }}{{2{{\bf{\Lambda }}_c} - {\bf{G}} + {\cal C}}}} \right)^{k - 1}}{\bf{a}}.
\end{eqnarray}

Note that in the case of a completely symmetric graph, where $a_i=a$, $b_i = b$, $g_{ij}=g$, $\forall i \in \cal N$. The optimal data demand of the user in time period $k$ in Algorithm 1, is given as:
\begin{equation}
{\widetilde x_i^{(k)}}  = {\left( {\frac{{2b + c}}{{4b - (N - 1)g + Nc}}} \right)^{k - 1}}\frac{a}{{4b - (N - 1)g + Nc}},
\end{equation}
where
\begin{equation}
{\frac{{2b + c}}{{4b - (N - 1)g + Nc}}} < 1.
\end{equation}

Thus, the price charging user $i$ in time period $k$, which is the $m$th user visited by the MNO, is given by
\begin{equation}\label{Eq:linearprice}
\widetilde p_i^{(k)} = {\left( {\frac{{2b + c}}{{4b - (N - 1)g + Nc}}} \right)^{k - 1}}\frac{{a\left[2b + c - (N - m)g\right]}}{{4b + Nc- (N - 1)g}}.
\end{equation}

It is clear from~(\ref{Eq:linearprice}) that the price charging to users increases with the increase of visiting orders. This leads to inter-individual disparity among users since prices vary widely. Therefore, the fair network utility allocation in the pricing shown in Algorithm~1 is not ensured. To alleviate this issue, we make a simple change in Algorithm~1 for selecting the visiting orders in each time period. Specifically, the MNO chooses any random order in the first time period, and selects the order based on the predefined max-min fairness criteria in subsequent time periods~\cite{lin2017performance}. In the case of completely symmetric graph, for the time period $k>1$, the order ${{\cal R}} \buildrel \Delta \over = \{ r_1, \ldots ,r_N \}$ is selected such that the following condition
\begin{equation}
r_i  = \arg \min_{i \in {\cal N} \backslash \{r_1, r_2, \ldots, r_{i-1}\}}{u_i}\left( {\left[ {\widetilde x_i^{(k')}} \right]_{k' = 1}^{k - 1},\left[ {\widetilde {\cal I}_i^{(k')}} \right]_{k' = 1}^{k - 1},\left[ {\widetilde p_i^{(k')}} \right]_{k' = 1}^{k - 1}} \right)
\end{equation}
holds, where ${u_i}\left(x_i^{(k)},{\mathcal{I}}_{i}^{(k)},p_i^{(k)}\right)$ is given in~(\ref{Eq:4}). In particular, the individual utility of the user in any time period $k$ which is the $m$th visited user, is given by
\begin{equation}
u_m = \frac{{{a^2}\left[ {2b + c - (N - m)g} \right]}}{{{{\left[ {4b + Nc- (N - 1)g} \right]}^2}}} + \frac{{{a^2}\left( {2b + c - mg} \right)}}{{{{\left[ {4b - (N - 1)g + Nc} \right]}^2}}}\frac{{{{\left( {\frac{{2b + c}}{{4b - (N - 1)g + Nc}}} \right)}^2}}}{{1 - {{\left( {\frac{{2b + c}}{{4b + Nc- (N - 1)g}}} \right)}^2}}}.
\end{equation}
This equation indicates that the user visited in the beginning has a higher individual utility compared with its counterparts which are visited in the end. This simple but effective scheme leads to the max-min fairness over time, with respect to allocating network utility of users.

\section{Simultaneous Dynamic Pricing Scheme}\label{Sec:Simultaneous}
In Section~\ref{Sec:Seqpricing}, we propose the sequential dynamic pricing scheme where the social data demand decision of users responds in different time scales. To obtain more insights from general dynamics where the users choose their strategies in the same time scale, we further explore a simultaneous pricing scheme.

In the simultaneous dynamic pricing scheme, the MNO determines the price at the beginning of each time period. Then, the mobile users decide on their individual social data demand simultaneously, taking the network effects in the social domain and the congestion effects in the network domain into account.

\subsection{Problem formulation}
We consider a finite selling time periods, $T$. Let ${\bf{p}}^{(k)} = \left[ p_1^{(k)} , \ldots , p_i^{(k)},  \ldots, p_N^{(k)} \right]^\top $ still be the pricing strategies determined by the MNO, where $p_i^{k}$ denotes the price charged to user $i$ in time period $k$. Given the price, the users interact with each other due to the network effects and congestion effects, as discussed in Section~\ref{Sec:Model}. In particular, we consider that the users do not interact with each other in the same time period because of the slow influence spread~\cite{chen2009efficient}. However, on one hand, the users are still influenced by the total social data demand of their social neighbours in previous time periods due to the network effects. On the other hand, the users are influenced by the past total social data demand of all the users due to the congestion effects.

Recall that $x_i^{(k)}$ denotes the social data demand of user $i$ in response to price $p_i^{(k)}$ and $y_i^k$ denotes the total social data demand of user $i$ until time period $k$. We still adopt the linear quadratic function to formulate the utility of the users. Each user $i \in \cal N $ in the time period $k$ decides on its social data demand to maximize its expected utility in this time period. Specifically, this is formulated as follows:
\begin{multline}\label{Eq:Simul}
{u_i}\left(x_i^{(k)},{\mathcal{I}}_{i}^{(k)},p_i^{(k)}\right) = {a_i}\left(y_i^{(k - 1)} + x_i^{(k)}\right)  - {b_i}{\left(y_i^{(k - 1)} + x_i^{(k)}\right)^2} + {\sum _{j \in \cal N}}{g_{ij}} y_j^{(k-1)} \left(y_i^{(k - 1)} + x_i^{(k)}\right)\\- \frac{c}{2} {\sum\limits_{j \in \cal N} \left(y_j^{(k - 1)} \right)^2}   - \sum\limits_{t = 1}^k {p_i^{(t)}x_i^{(t)}}.
\end{multline}
It is worth noting that the user's utility in time period $k$ depends only on other users' social data demand in previous $k-1$ time periods. Thus, (\ref{Eq:Simul}) is different from (\ref{Eq:4}) in the term that captures the influence within the current time period.

Therefore, the revenue maximization problem faced by the MNO can be written as follows:
\begin{equation}
\max_{\left\{{\bf{p}}^{(k)}\right\}} \sum\limits_{k \in \cal T}  {\sum\limits_{i \in \cal N} {p_{{i}}^{(k)} x_{{i}}^{(k)}} },
\end{equation}
where $\cal T$ denotes the selling time periods under consideration. We then analyze the optimal pricing and the corresponding social data demand in the following.

\subsection{Optimal pricing for slow influence spread}
According to the first-order condition, by setting the derivative $\frac{ {\partial{u_i}\left( {x_i^{(k)},{\cal I}_i^{(k)},p_i^{(k)}} \right)}}{{\partial x_i^{(k)}}} = 0,$ we obtain the optimal social data demand of user $i$ in time period $k$ for maximizing its individual utility in~(\ref{Eq:Simul}), which is given as $x_i^{(k)}= \max \left\{ {0,\frac{{{a_i} + \sum\limits_{j \in \cal N} {{g_{ij}}y_j^{(k - 1)}}  - c\sum\limits_{j \ne i} {y_j^{(k - 1)}}  - p_i^{(k)}}}{{2{b_i} + c}} - y_i^{(k)}} \right\}$.

Here we first impose the positivity constraint on the social data demand, and then validate that the optimal social data demand is indeed positive. Under this assumption, we have $(2{b_i} + c)y_i^{(k)} = {a_i} + \sum\limits_{j \in \cal N} {{g_{ij}}y_j^{(k - 1)}}  - c\sum\limits_{i \ne j} {y_j^{(k - 1)}}  - p_i^{(k)}$.
Thereafter, the optimal price charged to user $i$ in time period $k$ is given as $p_i^{(k)} = {a_i} + \sum\limits_{j \in \cal N} {{g_{ij}}y_j^{(k - 1)}}  - c\sum\limits_{j \in \cal N} {y_j^{(k - 1)}}  - 2{b_i}y_i^{(k)}$.

Therefore, the total revenue of the MNO can be written as
\begin{equation}
\Pi  = \sum\limits_{k' = 1}^T {\sum\limits_{i \in \cal N} {p_i^{(k')}x_i^{(k')}} }
=  \sum\limits_{k' = 1}^T {\sum\limits_{i \in \cal N} {\left( {{a_i} + \sum\limits_{j \in \cal N} {{g_{ij}}y_j^{(k' - 1)}}  - c\sum\limits_{j \in\cal N} {y_j^{(k' - 1)}}  - 2{b_i}y_i^{(k')}x_i^{(k')}} \right)x_i^{(k')}} }.
\end{equation}

In what follows, we continue to use the first-order optimality condition, i.e., $\frac{{\partial \Pi }}{{\partial x_i^{(k)}}} = 0$ for $t = 1, \ldots, T$ and $\forall i \in \cal N$. Then, we have $\frac{{\partial\left( \sum\limits_{k' = 1}^T {\sum\limits_{i \in \cal N} {\left( {{a_i} + \sum\limits_{j \in \cal N} {{g_{ij}}\sum\limits_{s = 1}^{k' - 1} {x_j^{(s)}} }  - c\sum\limits_{j \in \cal N} {\sum\limits_{s = 1}^{k' - 1} {x_j^{(s)}} }  - 2{b_i}\sum\limits_{s = 1}^{k'} {x_i^{(s)}} } \right)x_i^{(k')}} } \right) }}{{\partial x_i^{(k)}}} = 0$.
With simple transformations, we have the following equation for $k = 1, \ldots, T$ and $\forall i \in \cal N$,
\begin{multline}
{a_i} - 4{b_i} - 2{b_i}\sum\limits_{s = 1}^{k - 1} {x_i^{(s)}}  - c\sum\limits_{j \in \cal N} {\sum\limits_{s = 1}^{k - 1} {x_j^{(s)}} }  + \sum\limits_{j \in \cal N} {{g_{ij}}\sum\limits_{s = 1}^{k - 1} {x_j^{(s)}} }  - 2{b_i}\sum\limits_{s = k + 1}^T {x_i^{(s)}} \\ - c\sum\limits_{j \in \cal N} {\sum\limits_{s = k + 1}^T {x_j^{(s)}} }  + \sum\limits_{j \in \cal N} {{g_{ij}}\sum\limits_{s = k + 1}^{k - 1} {x_j^{(s)}} } = 0.
\end{multline}
For brevity, we write the matrix form of the above equation shown as follows:
\begin{equation}
{\bf a} - 2{\bf \Lambda} {{\bf x}^{(k)}} - \sum\limits_{s = 1}^{k - 1} ({\bf \Lambda}  - {\bf G} + {\mathcal C}){\bf x}^{(s)}  - \sum\limits_{s = k + 1}^T {({\bf \Lambda}  - {\bf G} + {\mathcal C}){{\bf x}^{(s)}}} = \bf 0.
\end{equation}
Due to the symmetry of this set of equations, we know that ${{\bf x}^{(k)}}$ is the same for $\forall k = 1, \ldots, T$. Thus, we have ${\bf{a}} - 2{\bf{\Lambda }}{{\bf{x}}^{(k)}} - (T - 1)({\bf{\Lambda }} - {\bf{G}} + {\cal C}){{\bf{x}}^{(k)}} = 0$.
Accordingly, we obtain the final expression for the optimal social data demand of users in time period $k$: ${{\bf{x}}^{*(k)}} = \big( {\left( {T + 1} \right){\bf{\Lambda }} - \left( {T - 1} \right)\left( {{\bf{G}} - {\cal C}} \right)} \big)^{-1} {\bf{a}}$.
This derived data demand is always positive because of Proposition 1, which thus justifies our positivity assumption of the social data demand. Recall that we have
\begin{equation}\label{Eq:Simuprice}
p_i^{*(k)} = {a_i} + \sum\limits_{j \in \cal N} {{g_{ij}}y_j^{(k - 1)}}  - c\sum\limits_{j \in \cal N} {y_j^{(k - 1)}}  - 2{b_i}y_i^{(k)}.
\end{equation}
Accordingly we rewrite~(\ref{Eq:Simuprice}) in a matrix form, which is shown as follows:
\begin{eqnarray}
{{\bf p}^{*(k)}} &=& {\bf{a}} + (k - 1){\bf{G}}{{\bf{x}}^{(k)}} - k{\cal C}{{\bf{x}}^{(k)}} - k{\bf{\Lambda }}{{\bf{x}}^{(k)}}\\
 &=& {\bf{a}} + \left( {(k - 1){\bf{G}} - k({\cal C} + {\bf{\Lambda }})} \right){\left( {(T + 1){\bf{\Lambda }} - \left( {T - 1} \right)\left( {{\bf{G}} - {\cal C}} \right)} \right)^{ - 1}}{\bf{a}}\nonumber \\
 &=& \left( {(T - k + 1){\bf{\Lambda }} + (T - k - 1){\cal C} - (T - k){\bf{G}}} \right){\left( {(T + 1){\bf{\Lambda }} - \left( {T - 1} \right)\left( {{\bf{G}} - {\cal C}} \right)} \right)^{ - 1}}{\bf{a}}.\nonumber
\end{eqnarray}

From the previous discussions in this section, we conclude with the following theorem.
\begin{theorem}
Provided that Assumption 1 is satisfied, the optimal prices in the time period $k$ under simultaneous dynamic pricing scheme is unique, which is given by
\begin{equation}\small
{{\bf p}^{*(k)}} = \left( {(T - k + 1){\bf{\Lambda }} + (T - k - 1){\cal C} - (T - k){\bf{G}}} \right){\left( {(T + 1){\bf{\Lambda }} - \left( {T - 1} \right)\left( {{\bf{G}} - {\cal C}} \right)} \right)^{ - 1}}{\bf{a}}.
\end{equation}
Likewise, the optimal social data demand of users in response to the given optimal price is also unique, which is written as follows:
\begin{equation}\small
{{\bf{x}}^{*(k)}} = \big( {\left( {T + 1} \right){\bf{\Lambda }} - \left( {T - 1} \right)\left( {{\bf{G}} - {\cal C}} \right)} \big)^{-1} {\bf{a}}.
\end{equation}
\end{theorem}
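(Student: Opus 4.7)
The plan is to solve the game by backward reasoning: first extract the user best response from the period-$k$ utility, use it to eliminate $p_i^{(k)}$ from the MNO's objective, apply first-order conditions on the reduced revenue to obtain $x^{*(k)}$, and finally recover $p^{*(k)}$ from the user-side relation. Concavity and an M-matrix argument will be used to secure uniqueness and nonnegativity.

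\textbf{User-side FOC.} For each $i$ and $k$, the utility in (\ref{Eq:Simul}) is quadratic and strictly concave in $x_i^{(k)}$ with leading coefficient $-(2b_i + c)$. Because the network-effect and congestion terms depend only on $y_j^{(k-1)}$ for $j\ne i$ under the slow-influence-spread assumption, the FOCs decouple across users within period $k$. Setting $\partial u_i/\partial x_i^{(k)} = 0$ and using $y_i^{(k)} = y_i^{(k-1)} + x_i^{(k)}$ yields the affine relation
\begin{equation*}
(2b_i + c)\,y_i^{(k)} = a_i + \sum_{j\in\mathcal{N}} g_{ij}y_j^{(k-1)} - c\sum_{j\ne i} y_j^{(k-1)} - p_i^{(k)},
\end{equation*}
which inverts to give $p_i^{(k)}$ as an explicit affine function of past cumulative demand $y^{(k-1)}$ and current cumulative demand $y_i^{(k)}$.

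\textbf{MNO-side FOC and reduction by symmetry.} I would substitute the expression for $p_i^{(k)}$ into $\Pi = \sum_{k=1}^{T}\sum_{i\in\mathcal{N}} p_i^{(k)} x_i^{(k)}$, so that $\Pi$ becomes a quadratic form purely in $(x^{(1)},\ldots,x^{(T)})$ through $y_i^{(k)} = \sum_{s\le k} x_i^{(s)}$. Differentiating with respect to each $x_i^{(k)}$ and collecting terms yields, for every $i$ and $k$, the matrix equation $\mathbf{a} - 2\Lambda x^{(k)} - \sum_{s\ne k}(\Lambda - G + \mathcal{C})\,x^{(s)} = \mathbf{0}$. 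This system is symmetric under permutation of the period indices, so at any stationary point $x^{(1)} = \cdots = x^{(T)} \equiv x^*$; the $T$ coupled equations then collapse to $\bigl((T+1)\Lambda - (T-1)(G - \mathcal{C})\bigr)x^* = \mathbf{a}$, which I would invert to obtain $x^{*(k)}$. Plugging $x^{*(k)}$ back into the user-FOC relation gives the announced formula for $p^{*(k)}$.

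\textbf{Main obstacle.} The delicate point is showing that $(T+1)\Lambda - (T-1)(G - \mathcal{C})$ is invertible and produces a nonnegative $x^{*(k)}$, so that the positivity assumption used when inverting the user FOC is self-consistent, and then upgrading stationarity to global optimality. I would handle invertibility by an argument parallel to Proposition~1: Assumption~1 ensures that $2\Lambda - G + \mathcal{C}$ is strictly row-diagonally-dominant, and adding the further positive diagonal contribution $(T-1)\Lambda$ preserves strict diagonal dominance for all $T\ge 1$, yielding invertibility together with an M-matrix structure whose inverse has nonnegative entries; applied to $\mathbf{a}>\mathbf{0}$ this forces $x^{*(k)}>\mathbf{0}$. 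Uniqueness then follows from strict concavity: each $u_i$ is strictly concave in $x_i^{(k)}$, and the Hessian of $\Pi$ with respect to the stacked vector $(x^{(1)},\ldots,x^{(T)})$ is negative definite under the same dominance condition, so the first-order conditions characterize the unique global maximizer on both sides of the interaction and the pair $(x^{*(k)}, p^{*(k)})$ is the unique subgame-perfect outcome.
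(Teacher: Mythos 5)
Your proposal follows essentially the same route as the paper: derive the user best response from the period-$k$ first-order condition, use it to eliminate $p_i^{(k)}$ from the MNO's revenue, apply the MNO's first-order conditions, exploit the permutation symmetry across period indices to collapse the system to $\big((T+1)\mathbf{\Lambda} - (T-1)(\mathbf{G}-\mathcal{C})\big)\mathbf{x}^{*}=\mathbf{a}$, and justify positivity of the resulting demand by a Proposition~1 type argument before back-substituting to recover $\mathbf{p}^{*(k)}$. The only difference is that you make explicit the diagonal-dominance/invertibility and concavity (second-order) checks that the paper leaves implicit; the argument is correct.
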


Note that we can reformulate the optimal prices as a function of time to analytically investigate the time-varying property of the price. For ease of presentation, we first define $\Xi (T) = {\left( {I - \frac{{T - 1}}{{T + 1}}\left( {{\bf{G}} - {\cal C}} \right){{\bf{\Lambda }}^{ - 1}}} \right)^{ - 1}}{\bf{a}}$.
Thus, we have ${{\bf{x}}^{(k)}} = \frac{{\Xi (T)}}{{(T + 1){\bf{\Lambda }}}}$,
and ${{x}}_i^{^{(k)}} = \frac{1}{{(T + 1)}}\frac{{{\Xi _i}(T)}}{{2{b_i}}}$.
To be specific, we reformulate the optimal price as
\begin{eqnarray}\label{Eq:Simupricechange}
p_i^{(k)} &=& {a_i} + \sum\limits_{j \in \cal N} {{g_{ij}}y_j^{(k - 1)}}  - c\sum\limits_{j \in \cal N} {y_j^{(k - 1)}}  - 2{b_i}y_i^{(k)}\nonumber \\
 &=& {a_i} + \sum\limits_{j \in \cal N} {{g_{ij}}\frac{{k - 1}}{{T + 1}}\frac{{{\Xi _j}(T)}}{{2{b_j}}}}  - c\sum\limits_{j \in \cal N} {\frac{{k - 1}}{{T + 1}}\frac{{{\Xi _j}(T)}}{{2{b_j}}}}  - 2{b_i}\frac{k}{{T + 1}}\frac{{{\Xi _i}(T)}}{{2{b_i}}} \nonumber \\
 &=& {a_i} - \frac{1}{{T + 1}}\sum\limits_{j \in \cal N} {{g_{ij}}\frac{{{\Xi _j}(T)}}{{2{b_j}}}}  + \frac{1}{{T + 1}}c\sum\limits_{j \in \cal N} {\frac{{{\Xi _j}(T)}}{{2{b_j}}}}  \nonumber \\ &&+ \frac{k}{{T + 1}}\left( {\sum\limits_{j \in \cal N} {{g_{ij}}\frac{{{\Xi _j}(T)}}{{2{b_j}}} - c\sum\limits_{j \in \cal N} {\frac{{{\Xi _j}(T)}}{{2{b_j}}} - {\Xi _i}(T)} } } \right).
\end{eqnarray}

It is observed from~(\ref{Eq:Simupricechange}) that the optimal prices under simultaneous dynamic pricing scheme change linearly with the slope given by
\begin{equation}
{\Phi _i}(T) = {\sum\nolimits_{j \in \cal N} {{g_{ij}}\frac{{{\Xi _j}(T)}}{{2{b_j}}} - c\sum\nolimits_{j \in \cal N} {\frac{{{\Xi _j}(T)}}{{2{b_j}}} - {\Xi _i}(T)} } }.
\end{equation}
The slope ${\Phi _i}(T) $ can be positive or negative, which indicates that the offered price may increase or decrease over $T$ time periods for different users. Furthermore, in the social data market where the users have homogeneous utility, the optimal prices decrease with time, which is characterized in the following proposition.
\begin{proposition}
If $a_i = a$, $b_i = b$ for all the users $i \in \cal N$, ${\Phi_i}(T)$ is negative, and thus the optimal prices under simultaneous dynamic pricing scheme decrease with time, i.e.,
\[\Phi (T) < 0.\]
\end{proposition}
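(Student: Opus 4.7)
The plan is to reduce the inequality $\Phi_i(T)<0$ to an entrywise upper bound on $\Xi(T)$, and then extract that bound from Assumption~1 together with a matrix-monotonicity argument. In the homogeneous case $a_i=a$, $b_i=b$, the defining formula for $\Xi$ rearranges to the fixed-point identity
\begin{equation*}
\Xi(T) = a\mathbf{1} + \frac{T-1}{2b(T+1)}\bigl(G-\mathcal{C}\bigr)\Xi(T),
\end{equation*}
so for $T>1$ we may substitute $\tfrac{1}{2b}(G-\mathcal{C})\Xi(T) = \tfrac{T+1}{T-1}\bigl(\Xi(T)-a\mathbf{1}\bigr)$ into the expression for $\Phi_i(T)$. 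Collecting terms should produce
\begin{equation*}
\Phi_i(T) \;=\; \frac{2\Xi_i(T)-a(T+1)}{T-1},
\end{equation*}
which, since $T-1>0$, reduces the claim to the entrywise bound $\Xi_i(T) < a(T+1)/2$. The boundary case $T=1$ is handled separately: there $\Xi(1)=a\mathbf{1}$, so $\Phi_i(1) = \tfrac{a}{2b}\bigl(\sum_j g_{ij}-Nc-2b\bigr)$, which is strictly negative by Assumption~1.

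Next, I would reformulate the bound using $B:=2b(T+1)I - (T-1)(G-\mathcal{C})$, under which $\Xi(T) = 2ab(T+1)B^{-1}\mathbf{1}$, so the target inequality becomes $B^{-1}\mathbf{1} < \tfrac{1}{4b}\mathbf{1}$ entrywise. The useful algebraic identity here is $B=(T-1)W + 4bI$ with $W:=\Lambda - G + \mathcal{C}$. Assumption~1 gives $[W\mathbf{1}]_i = 2b - \sum_j g_{ij} + Nc > 0$ for every $i$, and therefore, for $T>1$,
\begin{equation*}
B\mathbf{1} \;=\; (T-1)W\mathbf{1} + 4b\mathbf{1} \;>\; 4b\mathbf{1}
\end{equation*}
strictly and entrywise. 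Provided $B^{-1}$ has non-negative entries, multiplying this inequality by $B^{-1}$ on the left yields $\mathbf{1} = B^{-1}(B\mathbf{1}) > 4bB^{-1}\mathbf{1}$, which is exactly the required bound, and hence $\Phi_i(T)<0$ for every $i$.

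The hard part will be justifying the non-negativity of $B^{-1}$, because $B$ is not a classical $M$-matrix: its off-diagonal entries $(T-1)(c-g_{ij})$ can have either sign depending on whether the social tie $g_{ij}$ dominates the congestion coefficient $c$. I would address this in the same spirit as the Appendix-A proof of Proposition~1: split $G-\mathcal{C} = G - c\,\mathbf{1}\mathbf{1}^\top$ to isolate the genuinely non-negative graph contribution from a rank-one congestion correction, invert the pure-graph piece $\bigl(2b(T+1)I - (T-1)G\bigr)^{-1}$ by a Neumann series whose convergence is secured by Assumption~1 (giving a row-sum bound on $G$) together with the spectral-radius bound of Proposition~2, and then reintroduce the rank-one term $(T-1)c\,\mathbf{1}\mathbf{1}^\top$ through the Sherman--Morrison formula, checking that the resulting expression retains non-negative entries. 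Once this monotonicity lemma is in place, the previous paragraph closes the argument.
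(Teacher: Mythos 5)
Your proposal is correct and reaches the same conclusion by a genuinely different, and in places cleaner, route than the paper. The paper writes $\Phi(T)=-\bigl(I-(G-\mathcal{C})\Lambda^{-1}\bigr)\Xi(T)$, expands $\Xi(T)$ as a Neumann series, and telescopes to get $\Phi(T)=-\bigl(I-\tfrac{2}{T-1}\sum_{s\ge1}(\tfrac{T-1}{T+1})^s((G-\mathcal{C})\Lambda^{-1})^s\bigr)\mathbf{a}$, then bounds each power term by its row sums using Assumption~1. Your identity $\Phi_i(T)=\tfrac{2\Xi_i(T)-a(T+1)}{T-1}$, obtained from the fixed-point equation for $\Xi$, is algebraically the same statement as the paper's telescoped series (substitute $\Xi=(I+\sum_{s\ge1}\tau^sM^s)a\mathbf{1}$ to see they coincide), but you get it without manipulating infinite sums, and you also handle $T=1$ explicitly, where the paper's final expression is formally $0/0$. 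Your endgame is also tidier: rather than bounding every $M^s\mathbf{1}$, you reduce everything to $B^{-1}\mathbf{1}<\tfrac{1}{4b}\mathbf{1}$ and derive it from $B\mathbf{1}>4b\mathbf{1}$ (a one-line consequence of Assumption~1 via $B=(T-1)W+4bI$) plus non-negativity of $B^{-1}$. The only incomplete piece is that deferred non-negativity lemma, and here you are over-engineering: $B=2b(T+1)\bigl(I-\tfrac{T-1}{2b(T+1)}(G-\mathcal{C})\bigr)$, and under the paper's standing (if tacit) convention that $G-\mathcal{C}$ has non-negative entries --- the same convention its own Appendix-A proof of Proposition~1 invokes when asserting $\tfrac{g_{ij}-c}{4b_i+c}>0$ --- the Neumann series for $B^{-1}$ converges by the row-sum bound $\tfrac{T-1}{2b(T+1)}\sum_j(g_{ij}-c)<\tfrac{T-1}{T+1}<1$ and has non-negative terms, so no Sherman--Morrison correction is needed. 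Note that the paper's own row-sum bound on $\bigl((G-\mathcal{C})\Lambda^{-1}\bigr)^s$ silently relies on this same entrywise non-negativity (Assumption~1 controls $\sum_j(g_{ij}-c)$, not $\sum_j|g_{ij}-c|$), so your argument is no worse off than the paper's on this point; with that lemma filled in, your proof is complete.
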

\begin{proof}
Please refer to the Appendix-D for details.
\end{proof}

\section{Performance Evaluation}\label{Sec:Simulation}
In this section, we conduct the simulations to illustrate the impacts of different parameters on the proposed dynamic pricing schemes. In particular, we simulate the social graph $G$ using the Erd\H{o}s-R\'enyi (ER) graph to capture the social properties. In the ER graph, the social tie between any two users exists with the same probability $P_e$. Furthermore, we simulate the real social network based on the real data trace from Brightkite~\cite{cho2011friendship}. Brightkite is an online social networking service with explicit undirected social relationship based on mobile phone users. We randomly choose $N$ users from the real dataset and construct the social network, where $N = 10, 15, \ldots, 50$. For each given number of users, $N$, we obtain the average results with 500 runs. Figure~\ref{Fig:RealData} shows the total number of social ties and probability of social tie versus the number of users in the real dataset. The social tie between a pair of users follows the normal distribution ${\cal N}(\mu_g, 1)$, if it exists. Otherwise, the social tie is set as 0. We set the internal parameters of social users, $a_i$ and $b_i$, as the following normal distribution ${\cal N}(\mu_a, 1)$ and ${\cal N}(\mu_b, 1)$, respectively. The default parameter values are set as follows: $N=50$, $P_e = 0.8$, $\mu_a=1$, $\mu_b=20$, $\mu_g = 8$ and $c=10$. We implement the dynamic pricing with $50$ time periods.

\begin{figure}
\begin{minipage}[t]{0.5\linewidth}
\centering
\includegraphics[width=2.0in]{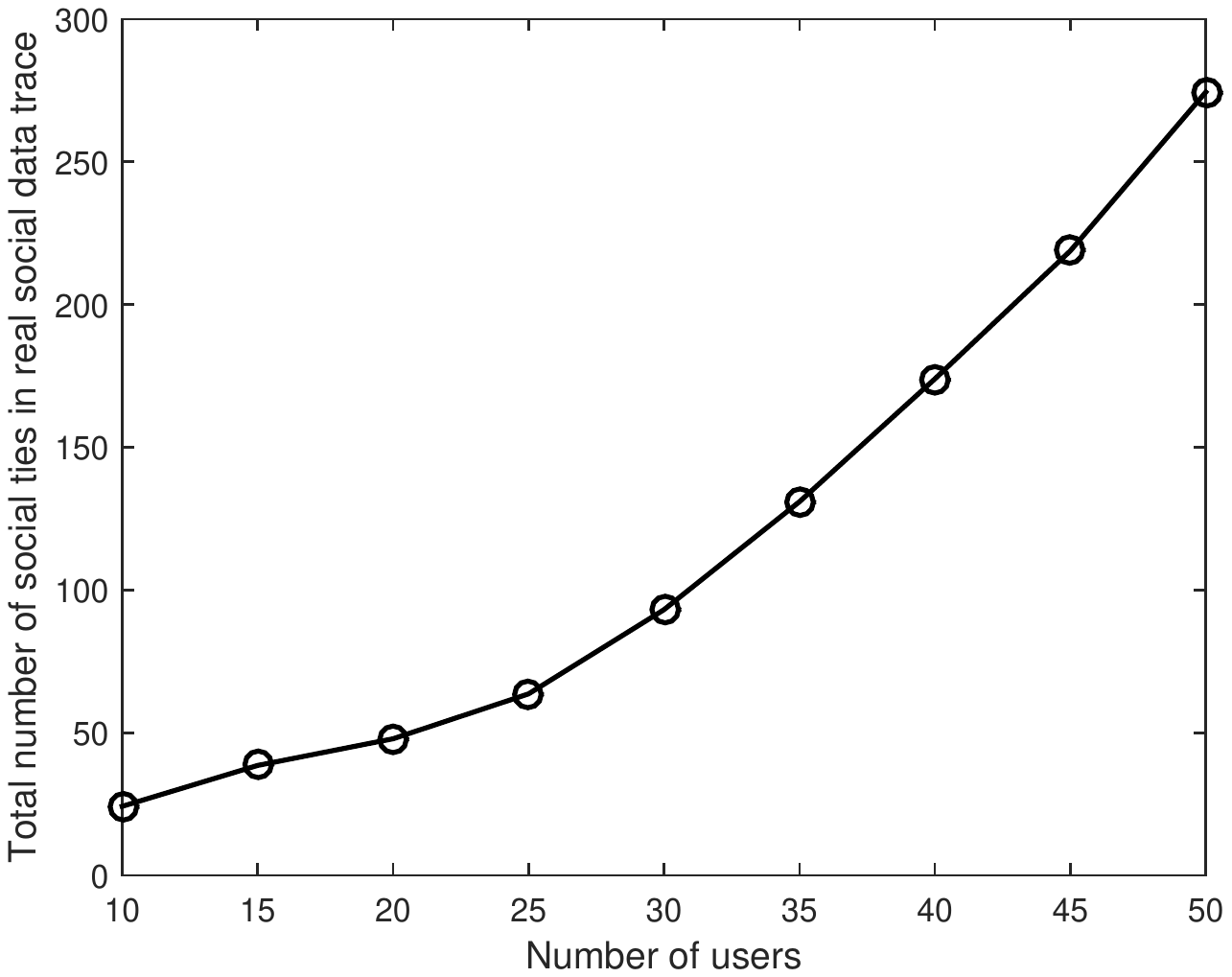}
\end{minipage}%
\begin{minipage}[t]{0.5\linewidth}
\centering
\includegraphics[width=2.0in]{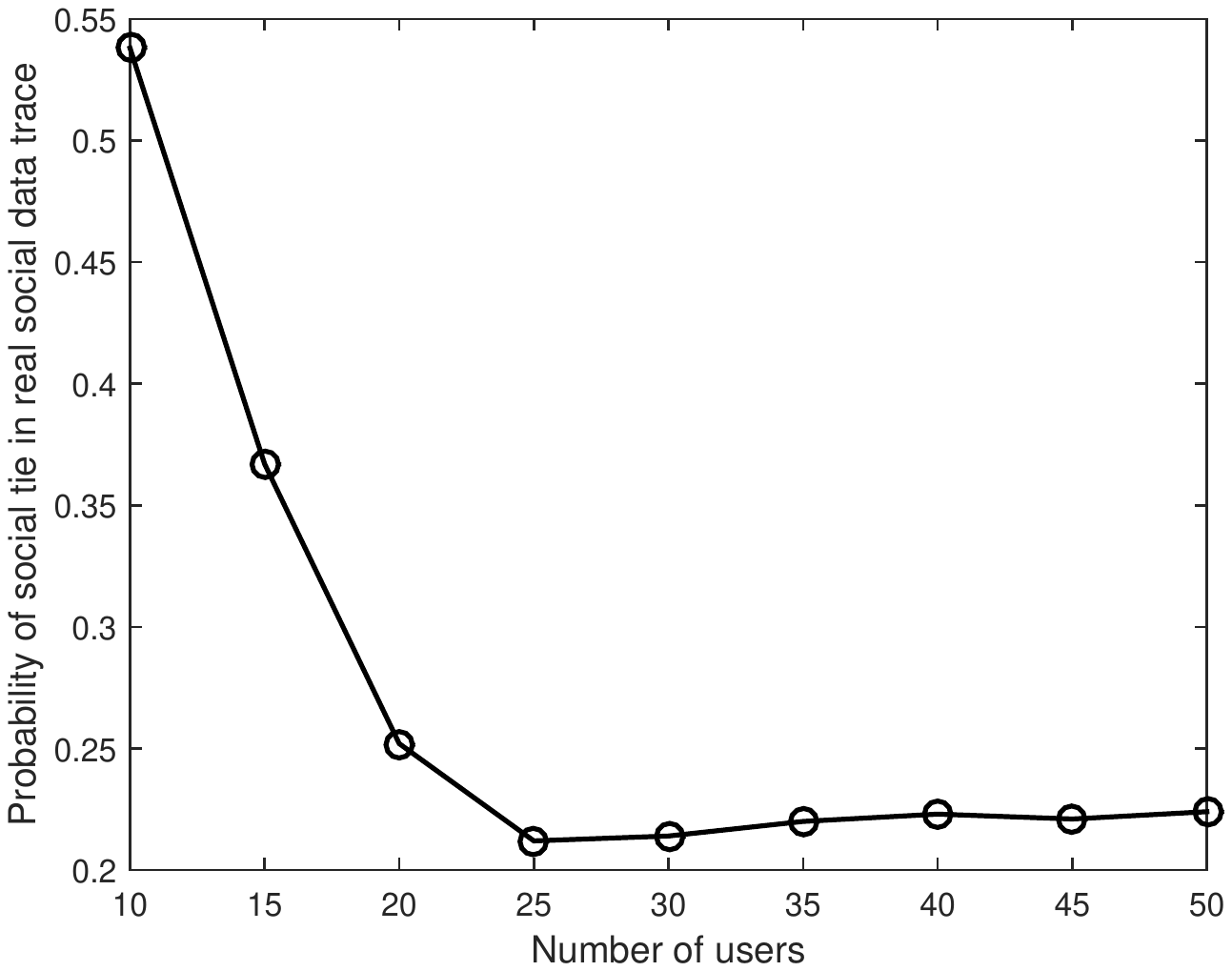}
\end{minipage}\caption{Real social data trace from Brightkite~\cite{cho2011friendship}: total number of social ties versus the number of users (left), and probability of social tie versus the number of users (right).}\label{Fig:RealData}
\vspace*{-4mm}
\end{figure}

\begin{figure}
\begin{minipage}[t]{0.325\textwidth}
\centering
\includegraphics[width=\textwidth]{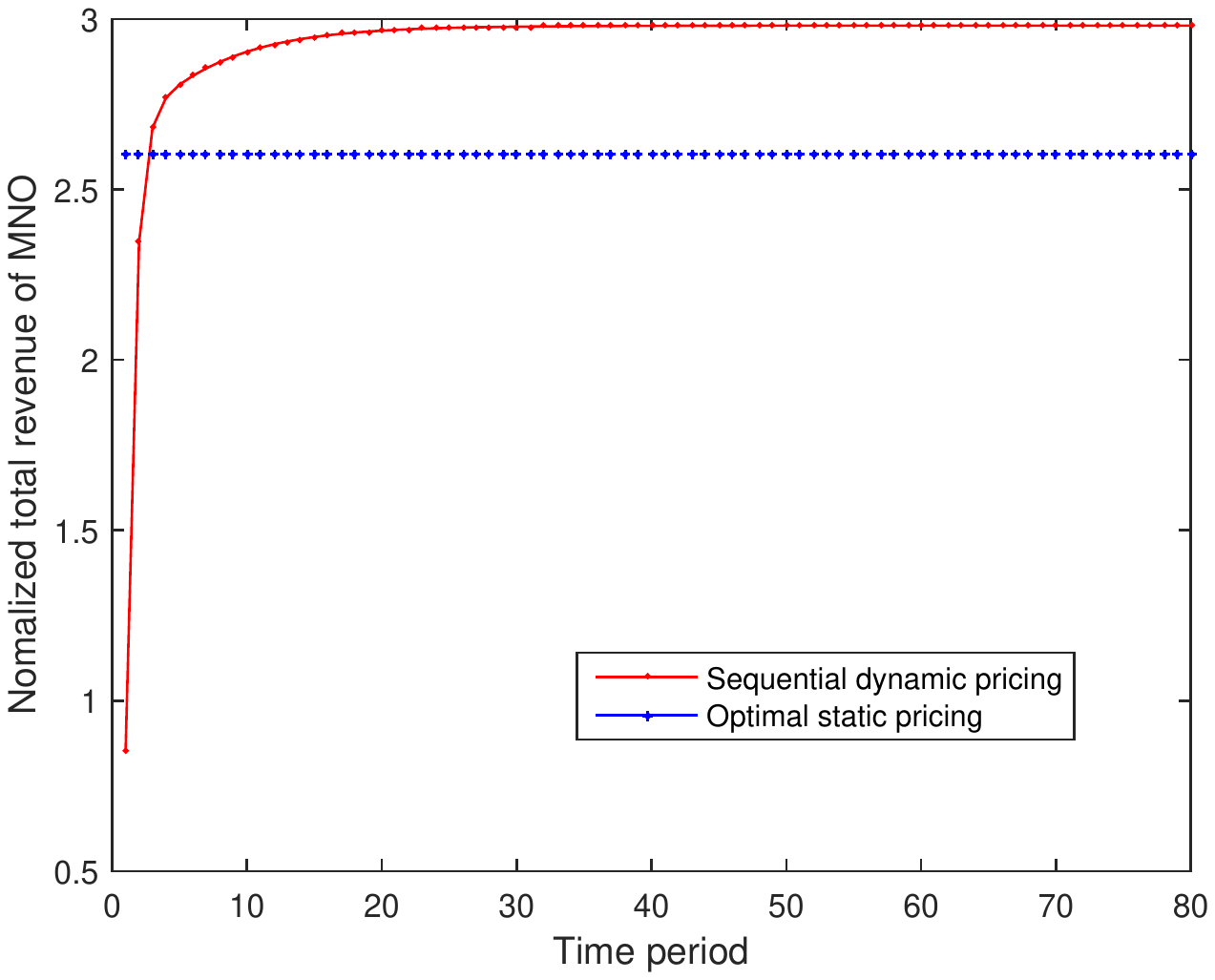}
\caption{Normalized total revenue of the MNO versus time periods.}
\label{fig:side:a}
\end{minipage}
\begin{minipage}[t]{0.33\textwidth}
\centering
\includegraphics[width=\textwidth]{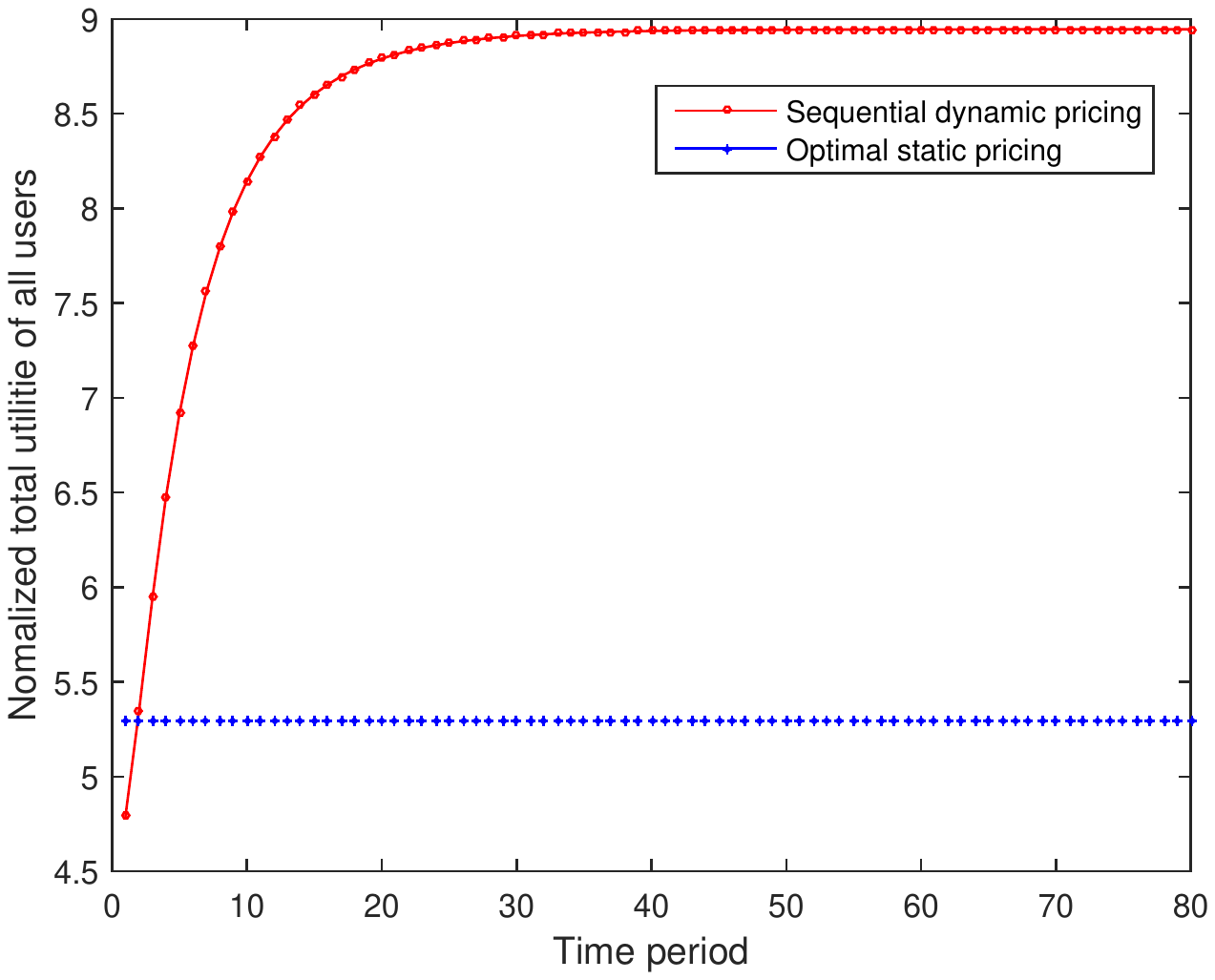}
\caption{Normalized total utilities of mobile users versus time periods.}
\label{fig:side:b}
\end{minipage}
\begin{minipage}[t]{0.33\textwidth}
\centering
\includegraphics[width=\textwidth]{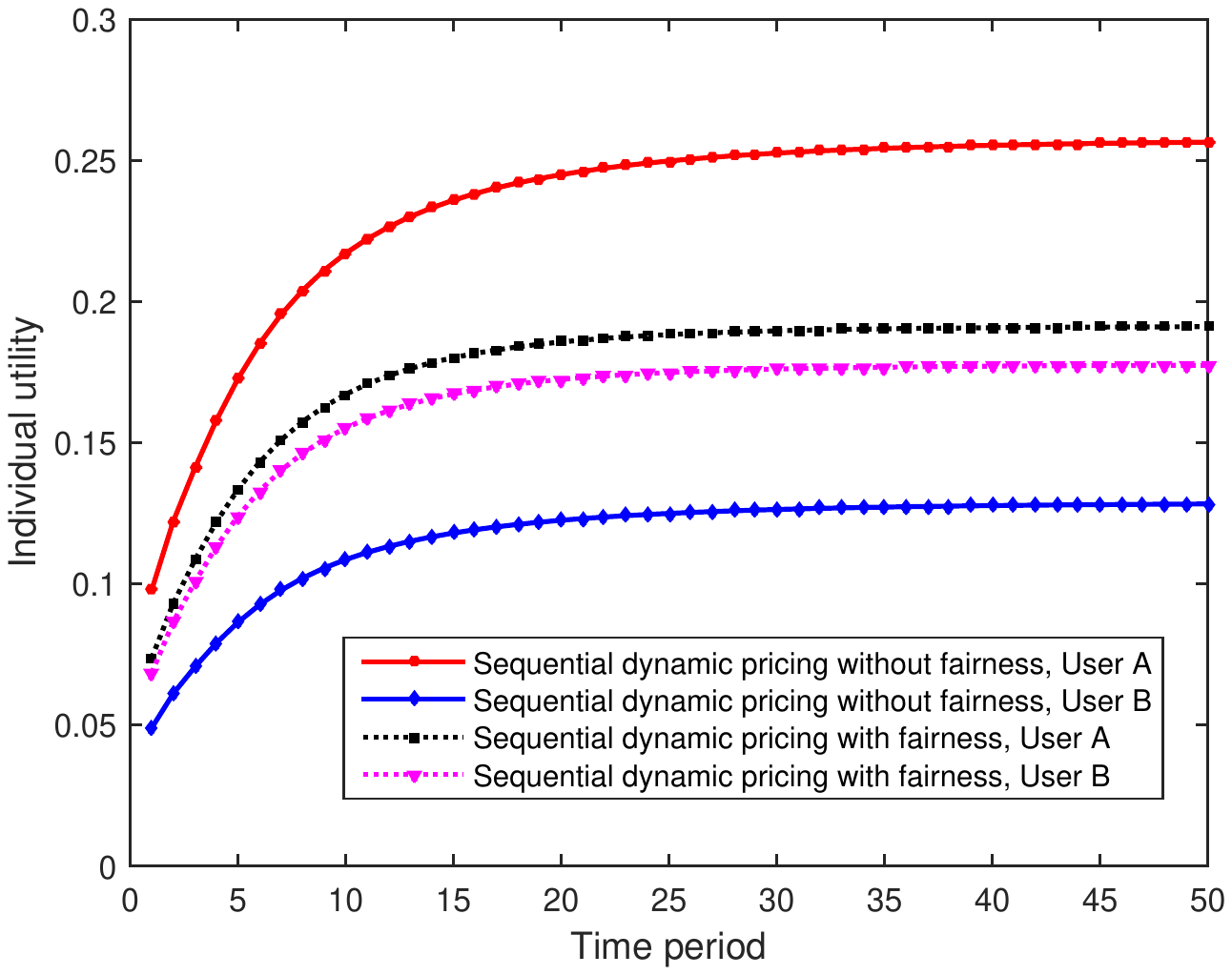}
\caption{The illustration of individual utility of users with and without social fairness consideration.}
\label{fig:side:c}
\end{minipage}
\vspace*{-6mm}
\end{figure}

\subsection{Sequential dynamic pricing}
From Figs.~\ref{fig:side:a}-\ref{fig:side:b} we observe that the convergence of sequential dynamic pricing (\textit{SeqDP}) in terms of the revenue of the MNO and the total utilities of users can be guaranteed within the first $40$ time periods. In addition, the convergence result of the proposed pricing outperforms the optimal value of the optimal static pricing (\textit{OSP}) scheme. Moreover, we compare the individual utility of two randomly selected users, under \textit{SeqDP} with and without social fairness consideration, as illustrated in Fig.~\ref{fig:side:c}. This demonstrates that the modified \textit{SeqDP} is able to achieve the social fairness in terms of individual network utility.

To evaluate the total utilities of mobile users gained from social data demand and the revenue of the MNO, we now compare the revenue and total utilities of the proposed \textit{SeqDP} and those of \textit{OSP} in Figs.~\ref{Fig:SeqDP_edgeprobability}-\ref{Fig:SeqDP_congestion}. As a benchmark, we also evaluate the performance when the social data demand of users is not interdependent. This is a special case of our proposed socially aware user utility where all the social ties equal 0. We also compare the performance under the ER based social graph model (social graph-ER) with the real dataset, i.e., Brightkite based social graph model (social graph-Brightkite).

\begin{figure}
\begin{minipage}[t]{0.5\linewidth}
\centering
\includegraphics[width=2.35in]{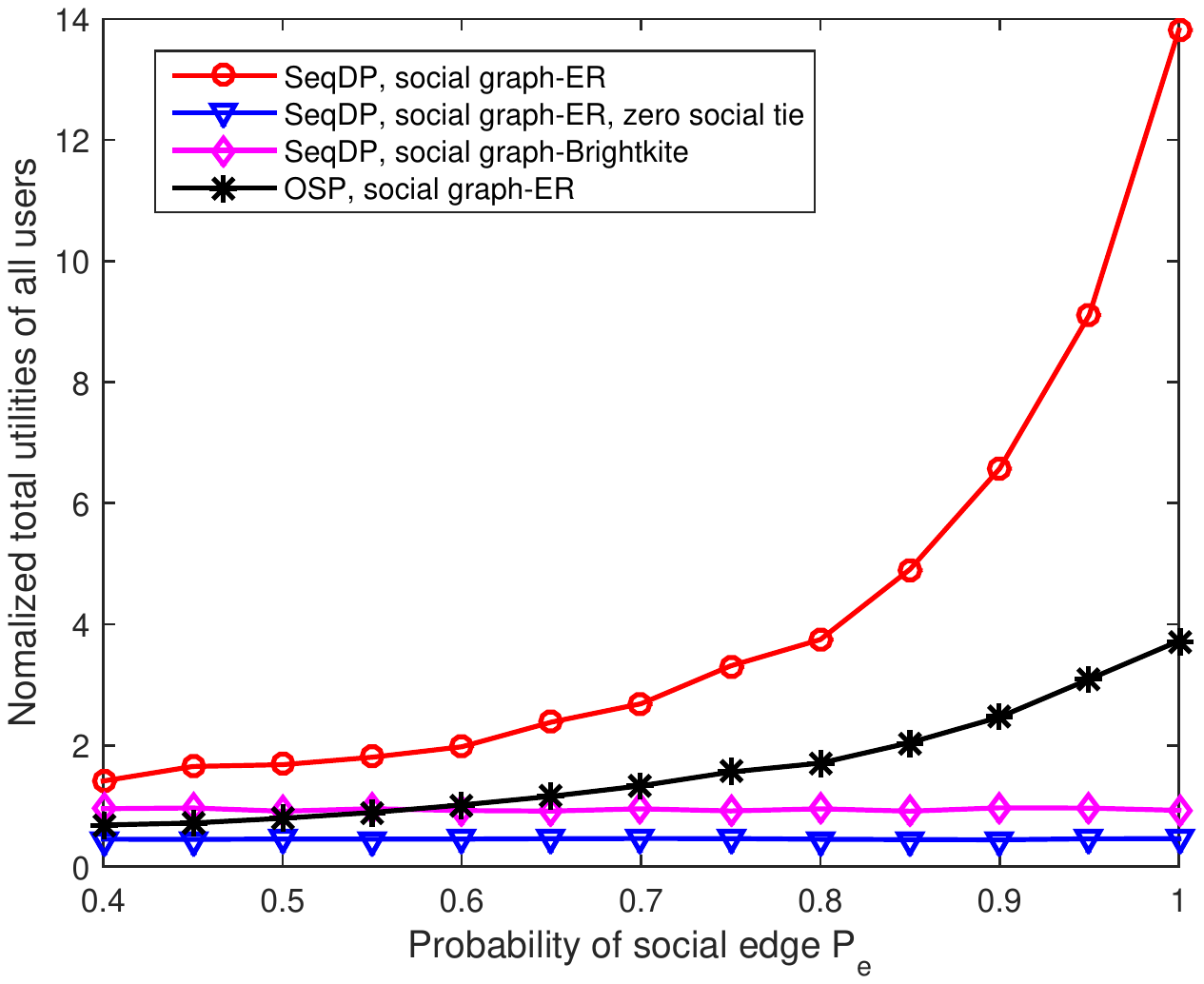}
\end{minipage}%
\begin{minipage}[t]{0.5\linewidth}
\centering
\includegraphics[width=2.35in]{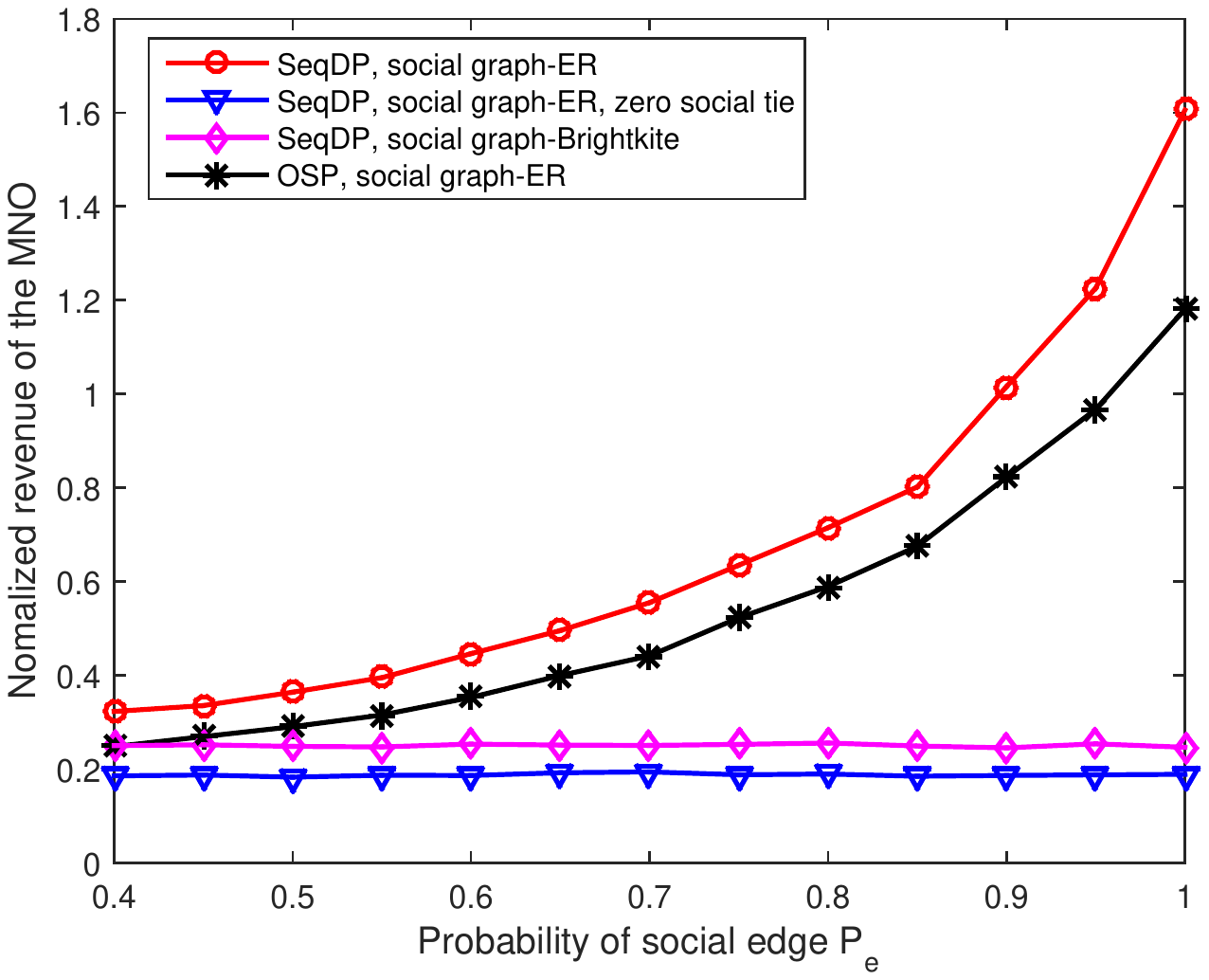}
\end{minipage}\caption{Normalized total utilities of users and normalized revenue of the MNO versus the probability of social edge.}\label{Fig:SeqDP_edgeprobability}
\vspace*{-4mm}
\end{figure}

\begin{figure}
\begin{minipage}[t]{0.5\linewidth}
\centering
\includegraphics[width=2.4in]{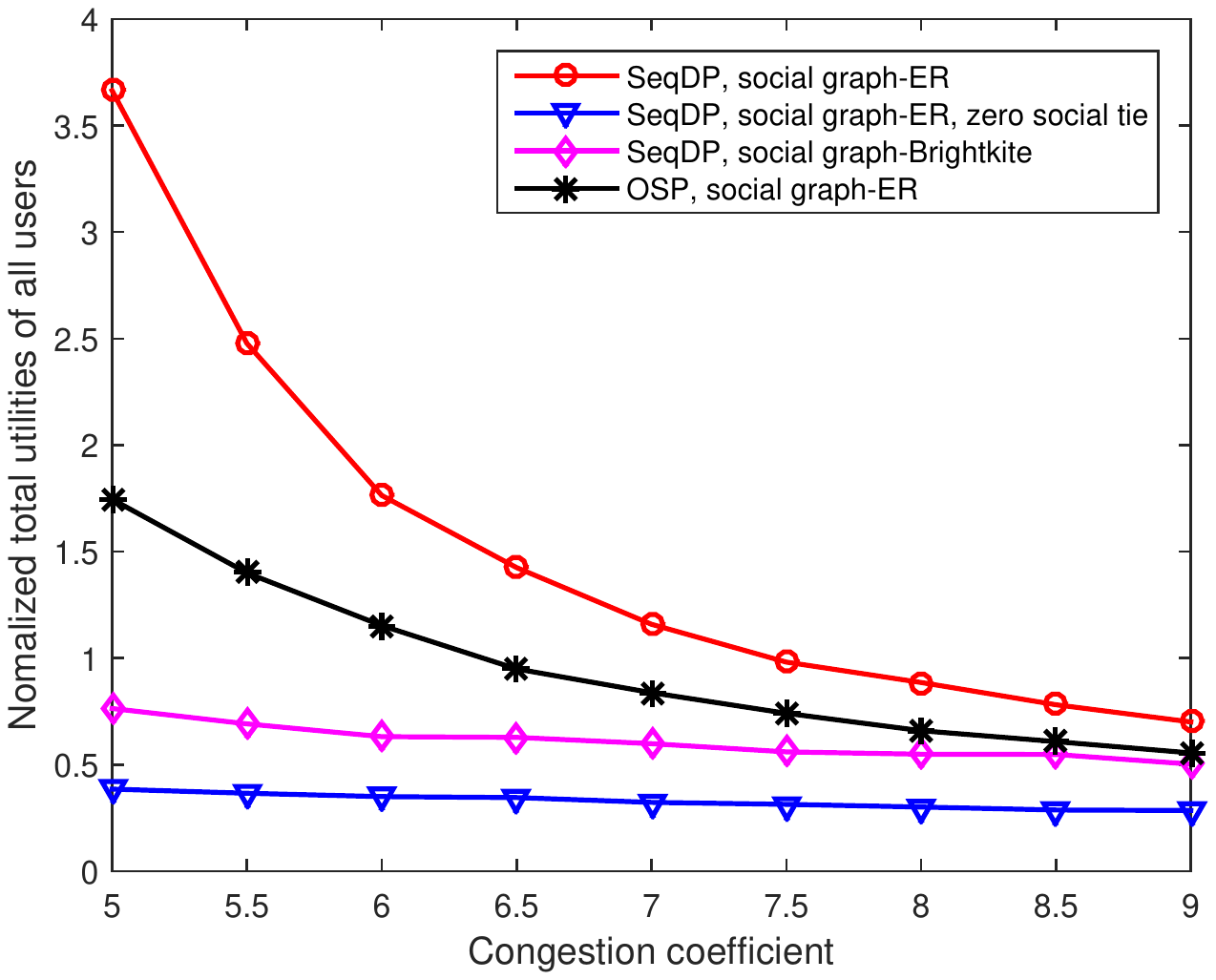}
\end{minipage}%
\begin{minipage}[t]{0.5\linewidth}
\centering
\includegraphics[width=2.4in]{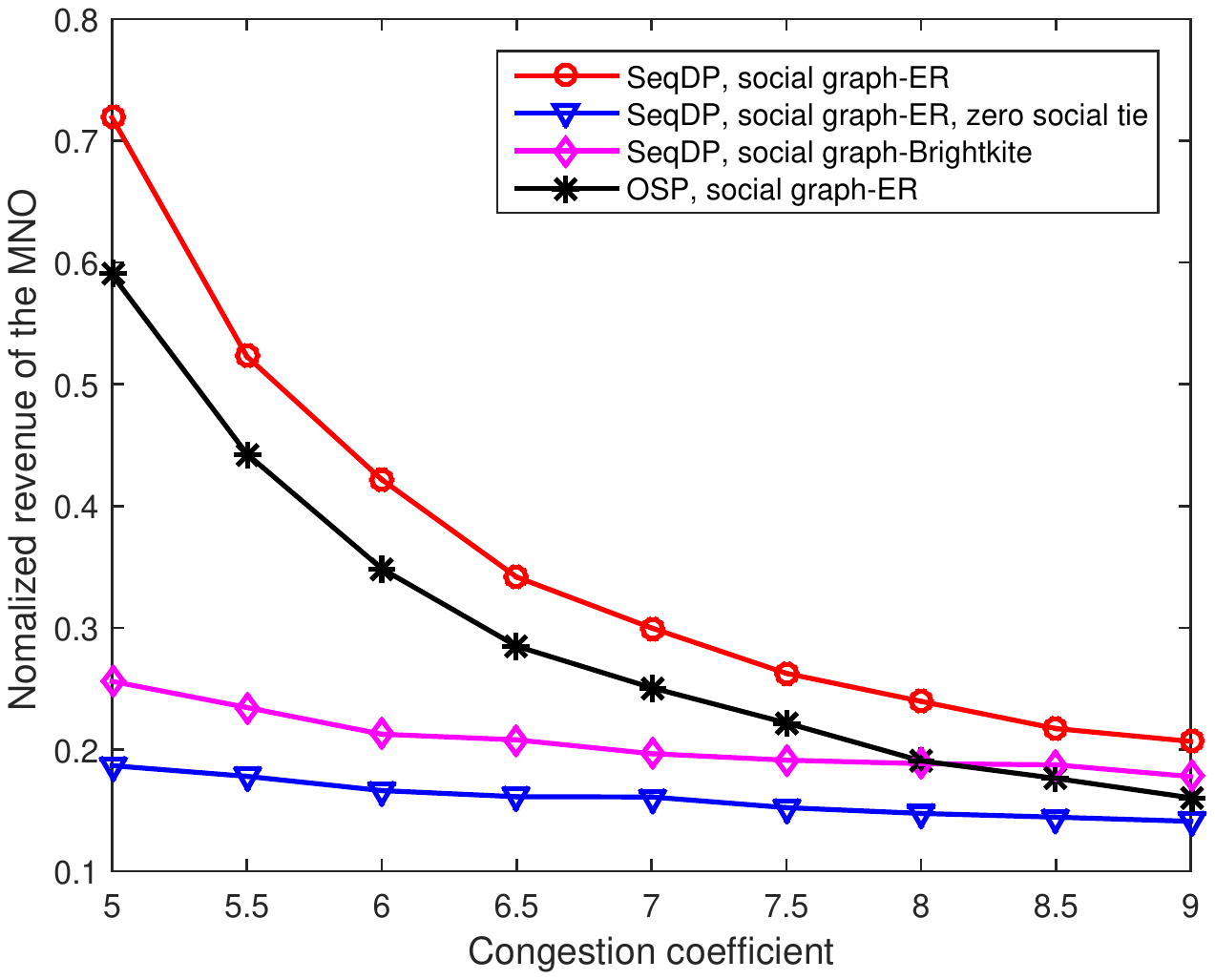}
\end{minipage}\caption{Normalized total utilities of users and normalized revenue of the MNO versus the congestion coefficient.}\label{Fig:SeqDP_congestion}
\vspace*{-6mm}
\end{figure}

From Fig.~\ref{Fig:SeqDP_edgeprobability}, the total utilities increase with the probability of social edge, and the total utilities achieved in the proposed \textit{SeqDP} are much larger than those of \textit{OSP} when the probability of social edge is higher. As the probability of social edge increases, the number of social neighbours of one user increases, and the additional benefits due to neighbours' social data demand become greater, and thus the total utilities increase. Therefore, the revenue of the MNO obtained by the \textit{SeqDP} increases with the increase of probability of social edge. The intuition is that the higher probability of social edge leads to higher social data demand due to underlying network effects, which in turn promotes the revenue of the MNO. This can be verified by the performance under the ER based social graph model with zero social tie. In this special case, there is no network effect, and thus the probability of social edge does not affect the performance in terms of the total utilities of users and the revenue of the MNO. Moreover, we evaluate the impact of congestion effects on the MNO and users, as shown in Fig.~\ref{Fig:SeqDP_congestion}. Under all the cases, we observe that the total utilities of users and the revenue of the MNO decrease as the congestion coefficient increases. Intuitively, with larger congestion, the negative impact coming from others' social data demand increases, and thus the utility of each user becomes lower. Consequently, the decreasing social data demand leads to the decrease of the revenue of the MNO. Furthermore, given the number of users, $50$, the probability of social edge in the real dataset Brightkite can be obtained from Fig.~\ref{Fig:RealData}, which is lower than 0.3. Thus, the performance of \textit{SeqDP} under ER based social graph is better than that under the Brightkite based social graph, as illustrated in Figs.~\ref{Fig:SeqDP_edgeprobability}-\ref{Fig:SeqDP_congestion}.

\subsection{Simultaneous dynamic pricing}
\begin{figure}[t]
\centering
\includegraphics[width=0.45\textwidth]{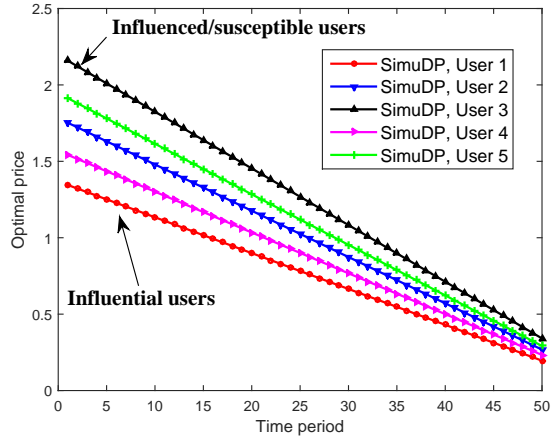}
\caption{The illustration of optimal prices of different users versus time periods under simultaneous dynamic pricing.}\label{Fig:SimuDP_Price}
\vspace*{-6mm}
\end{figure}

Figure~\ref{Fig:SimuDP_Price} illustrates the optimal price of selected 5 users in 50 time periods under \textit{SimuDP}. We observe that the optimal price decreases with the increasing time periods, which is consistent with Proposition 3. Furthermore, the decreasing slope is different for different users. Specifically, the selected 5 users in Fig.~\ref{Fig:SimuDP_Price} have different social relation factors. Therein, user 1 is the most influential user and user 3 is the least influential user, i.e., the most susceptible or influenced user. In other words, user 1 can influence more users due to network effects and user 3 is in the opposite. We observe that the price offered to the more influential user is lower and the decreasing rate is lower. The reason is that the MNO wants to offer the discount price to the influential users which can bring more potential users in subsequent time periods. However, the new coming users may lead to the decrease of user utility due to the congestion effects. Therefore, the decreasing rate of the price offered to the influential user is not higher than that of influenced or susceptible users.

\begin{figure}
\begin{minipage}[t]{0.5\linewidth}
\centering
\includegraphics[width=2.4in]{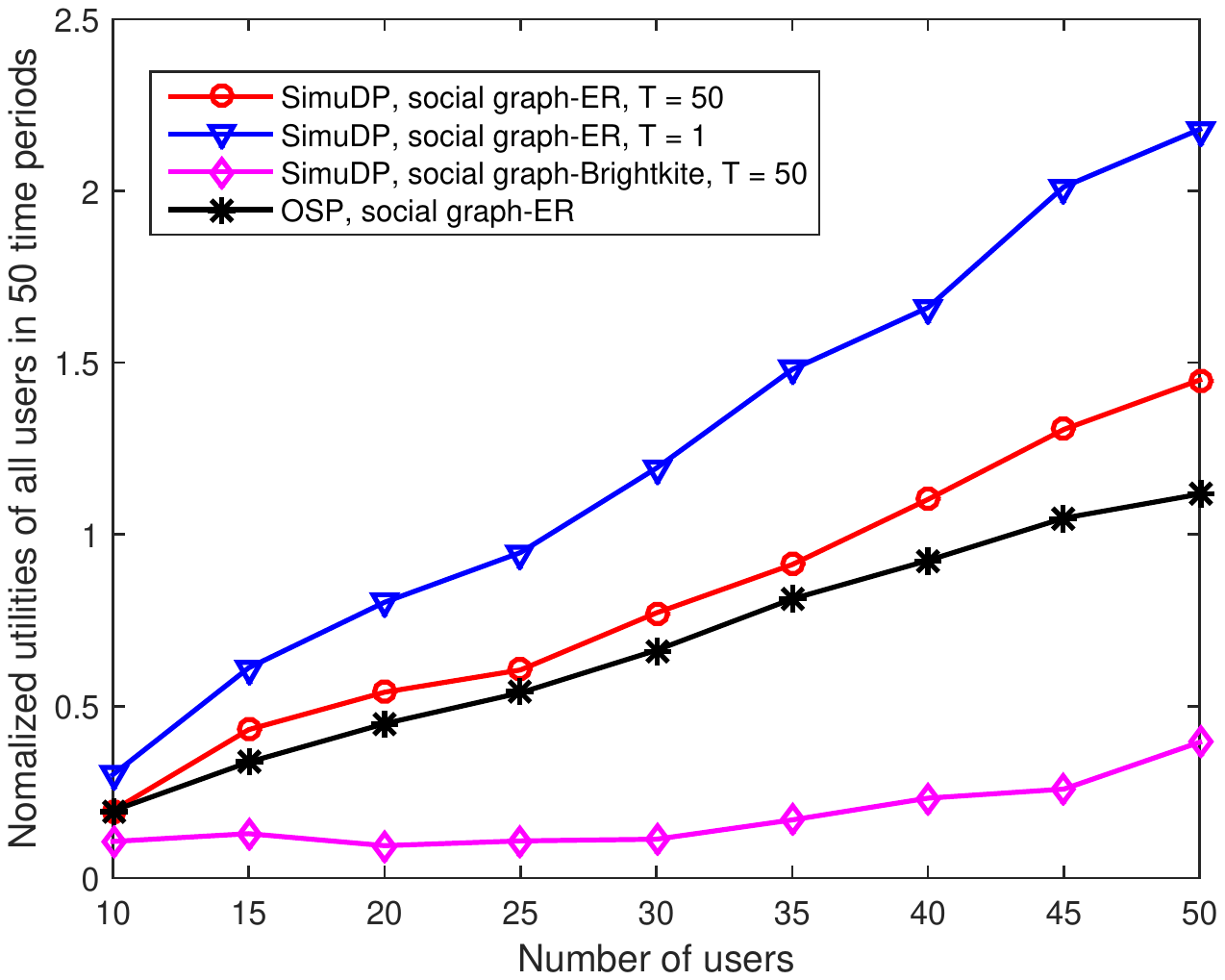}
\end{minipage}%
\begin{minipage}[t]{0.5\linewidth}
\centering
\includegraphics[width=2.4in]{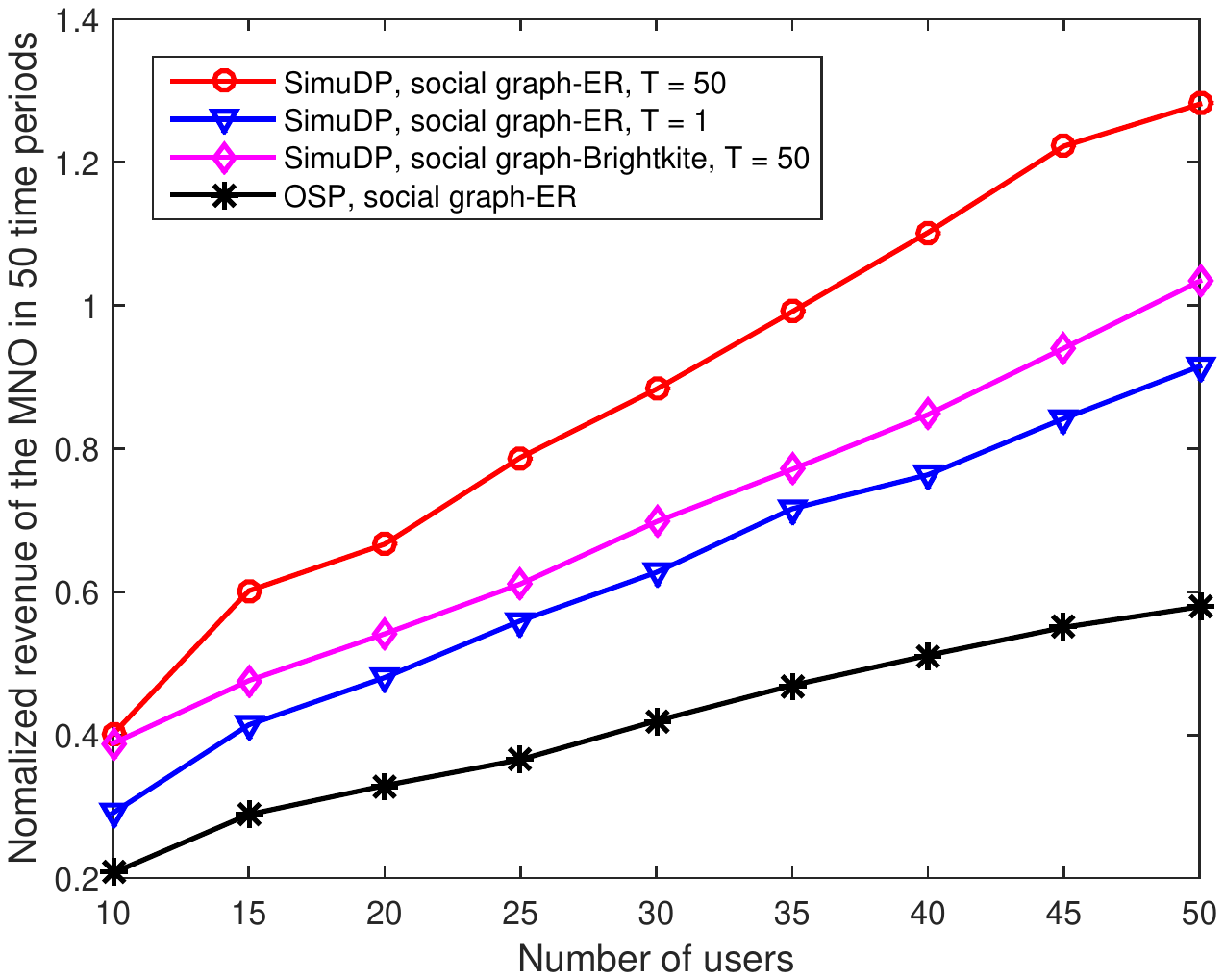}
\end{minipage}\caption{Normalized total utilities of users and normalized revenue of the MNO in 50 time periods versus the number of mobile users.}\label{Fig:SeqDP_number}
\vspace*{-6mm}
\end{figure}

We now fix the selling time periods as 50, and investigate the impact of different parameters on the performance of our proposed simultaneous dynamic pricing (\textit{SimuDP}) scheme, In addition to the \textit{OSP} and the real dataset Brightkite, we further consider the following two cases. In the first case, the MNO can only foresee 1 time period, i.e., the MNO maximizes its current revenue in each time period myopically, i.e., a greedy scheme. In the second case, the MNO can foresee 50 time periods, i.e., the MNO is fully rational and can maximize its revenue in the entire 50 time periods. Figure~\ref{Fig:SeqDP_number} illustrates the total utilities of users and the revenue of the MNO when the number of users increases. Under all the cases, we find that both the total utilities of users and the revenue of the MNO increase with the increase of the number of users. The reason is that adding more users would enhance each user's interactions with others, and thus potentially stimulate more social data demand of new coming users. Consequently, the increasing content demand results in a higher revenue of the MNO. Under the greedy \textit{SimuDP} where $T=1$, we observe that the revenue of the MNO is lower than that under the rational \textit{SimuDP} where $T=50$. The reason is that the fully rational MNO is able to foresee the social data demand in entire 50 time periods and thus extract more surplus with a higher revenue. Therefore, the total utilities of users from greedy \textit{SimuDP} are lower than that from the rational \textit{SimuDP}.

As expected, in the Brightkite based social graph, both the total utilities of users and the revenue of the MNO is lower than those from the ER based social graph. The intuitive reason is that the probability of social edge in the real dataset between any pair of users is smaller. Thus, when more users join, the social edge between the new users and existing users is weak, and accordingly the congestion effects dominate the network effects. Moreover, in Fig.~\ref{Fig:SeqDP_socialtie}, we observe that the performance of \textit{SimuDP} in terms of the total utilities of users and the revenue of the MNO increase when the average value of social tie increases. The reason is that as the network effects become stronger, the social data demand of each user is promoted due to stronger positive interdependency of each other. Consequently, the increase of social data demand results in a higher level of the revenue of the MNO. Both Figs.~\ref{Fig:SeqDP_number} and~\ref{Fig:SeqDP_socialtie} demonstrate the superior performance of the \textit{SimuDP} in terms of the revenue of the MNO over \textit{OSP}.

\begin{figure}
\begin{minipage}[t]{0.5\linewidth}
\centering
\includegraphics[width=2.5in]{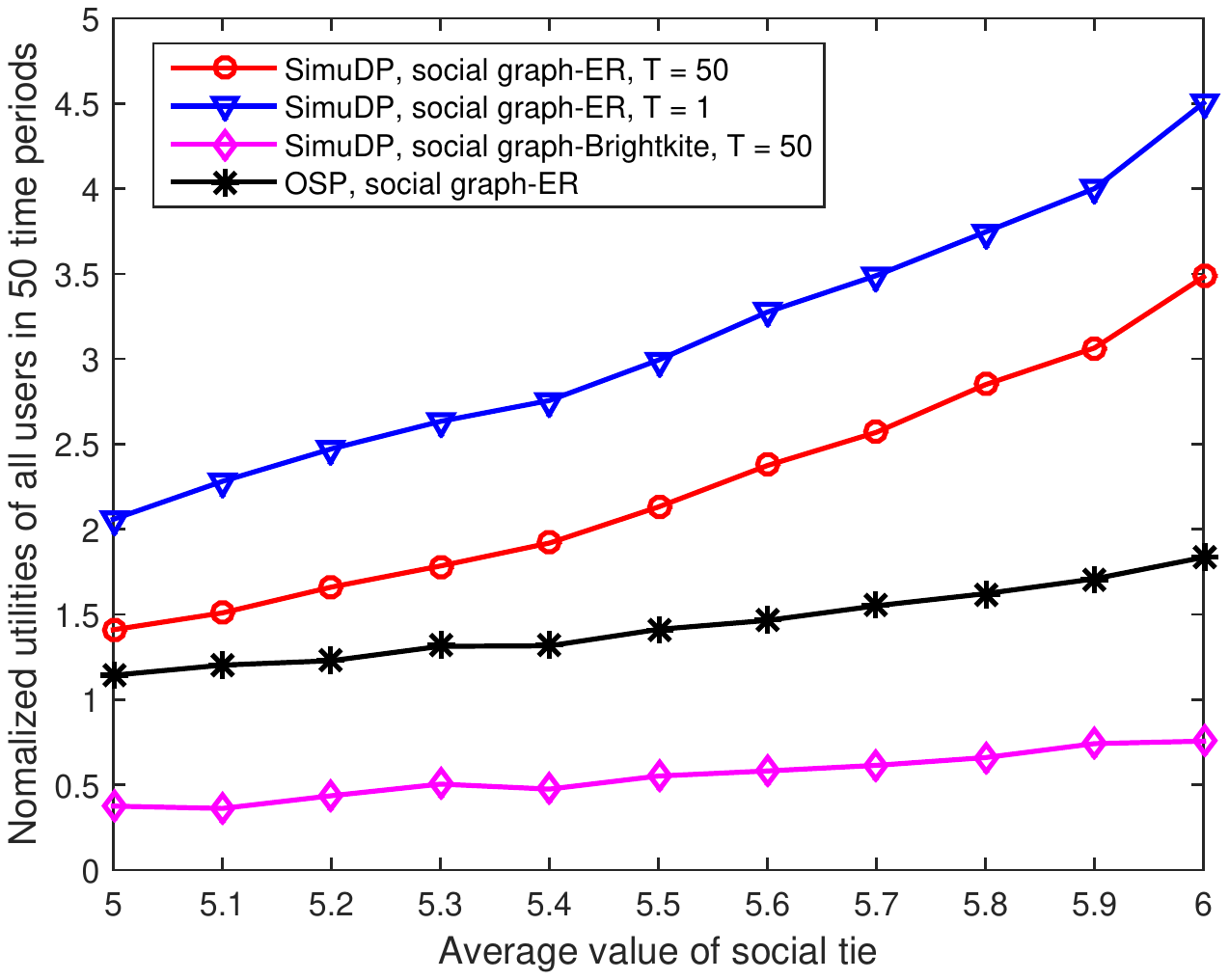}
\end{minipage}%
\begin{minipage}[t]{0.5\linewidth}
\centering
\includegraphics[width=2.5in]{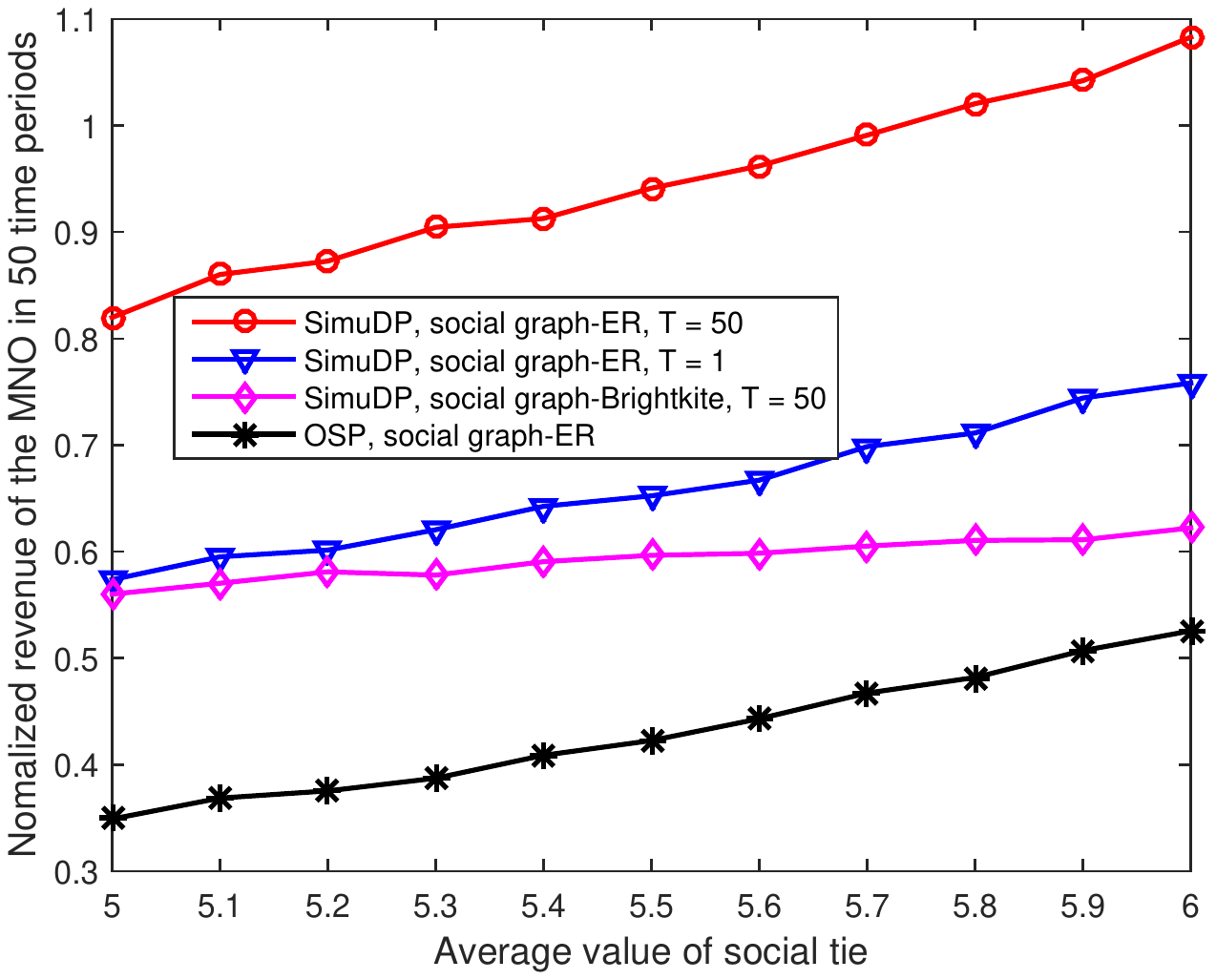}
\end{minipage}\caption{Normalized total utilities of users and normalized revenue of the MNO in 50 time periods versus average value of social tie.}\label{Fig:SeqDP_socialtie}
\vspace*{-6mm}
\end{figure}

\section{Conclusion}\label{Sec:Conclusion}

In this paper, we have presented a revenue maximization framework with the dynamic pricing schemes in mobile social data market. We have proposed a sequential dynamic pricing scheme where the mobile network operator individually offers a certain price to each mobile user for social data access in multiple time periods sequentially and repeatedly. The proposed pricing scheme has explored the network effects in the social domain and the congestion effects in the network domain. Furthermore, the alternative pricing policy has been implemented to ensure the social fairness among users with respect to individual utilities. 
Moreover, we have proposed a simultaneous dynamic pricing scheme to obtain more insights. 
Through the Erd\H{o}s-R\'enyi graph and the real dataset based social graph, we have conducted extensive performance evaluations to validate the outperformance of the dynamics of pricing schemes against static ones in terms of the revenue of the operator. 

\appendix
\subsection{Proof of Proposition 1}
\begin{proof}
We firstly denote $v$ as the eigenvector of ${(2{{\bf{\Lambda }}_c})^{ - 1}}\left( {{\bf{G}} - {\cal C}} \right)$ and $\lambda$ is the corresponding eigenvalue. Moreover, we denote $v_i$ as the largest entry of $v$ in absolute value, $\forall j \in \cal N$. Therefore, $\left| \lambda  \right|$ is not unbounded, and the proof steps are shown as follows:
\begin{eqnarray}
\left| {\lambda {v_i}} \right| &=& \left| {{{\left[ {{{(2{{\bf{\Lambda }}_c})}^{ - 1}}\left( {{\bf{G}} - {\cal C}} \right)} \right]}_i}v} \right| \nonumber\\
&\le& \sum\limits_{j \in \cal N} {{{\left[ {{{(2{{\bf{\Lambda }}_c})}^{ - 1}}\left( {{\bf{G}} - {\cal C}} \right)} \right]}_{ij}}\left| {{v_j}} \right|}
\le \frac{{\left| {{v_i}} \right|}}{{4{b_i} + c}}\sum\limits_{j \in \cal N} {\left( {{g_{ij}} - c} \right)}  <\frac{{\left| {{v_i}} \right|}}{2},
\end{eqnarray}
where ${{{\left[ {{{(2{{\bf{\Lambda }}_c})}^{ - 1}}\left( {{\bf{G}} - {\cal C}} \right)} \right]}_i}}$ represents the $i$th row of ${{{(2{{\bf{\Lambda }}_c})}^{ - 1}}\left( {{\bf{G}} - {\cal C}} \right)}$. In particular, the former two inequalities are based on the fact that
\begin{equation}
{{{\left[ {{{(2{{\bf{\Lambda }}_c})}^{ - 1}}\left( {{\bf{G}} - {\cal C}} \right)} \right]}_{ij}}}= \frac{{{g_{ij}} - c}}{{4{b_i} + c}}>0,
\end{equation}
and the last inequality is satisfied under Assumption 1. Therefore, we can make sure that every eigenvalue of ${(2{{\bf{\Lambda }}_c})^{ - 1}} ({{\bf{G}} - {\cal C}})$ is smaller than 1.

We observe that each eigenvalue of ${\bf{I}}-{(2{{\bf{\Lambda }}_c})^{ - 1}}\left( {{\bf{G}} - {\cal C}} \right)$ is $1-\lambda$, where $\lambda$ is an eigenvalue of ${(2{{\bf{\Lambda }}_c})^{ - 1}}\left( {{\bf{G}} - {\cal C}} \right)$. Since every eigenvalue of ${(2{{\bf{\Lambda }}_c})^{ - 1}}\left( {{\bf{G}} - {\cal C}} \right)$ is smaller than 1, we conclude that none of the eigenvalues of ${\bf{I}}-{(2{{\bf{\Lambda }}_c})^{ - 1}}\left( {{\bf{G}} - {\cal C}} \right)$ is zero, and thus this matrix is invertible. This also indicates that the matrix $(2{{\bf{\Lambda }}_c} - {\bf{G}} + {\mathcal{C}})$ is invertible. Moreover, we have
\begin{eqnarray}
{(2{{\bf{\Lambda }}_c} - {\bf{G}} + {\cal C})^{ - 1}} &=& {\left[ {{\bf{I}} - {{({2}{{\bf{\Lambda }}_{\bf{c}}})}^{ - {{1}}}}\left( {{\bf{G}} - {\mathcal{C}}} \right)} \right]^{ - 1}}{({2}{{\bf{\Lambda }}_{\bf{c}}})^{ - {{1}}}}\nonumber\\
 &=& \sum\limits_{j = 1}^\infty  {{{\left[ {{{({2}{{\bf{\Lambda }}_{\bf{c}}})}^{ - {{1}}}}\left( {{\bf{G}} - {\mathcal{C}}} \right)} \right]}^j}{{({2}{{\bf{\Lambda }}_{\bf{c}}})}^{ - {{1}}}}}.
\end{eqnarray}
This is based on the fact that the spectral radius of ${(2{{\bf{\Lambda }}_c})^{ - 1}}\left( {{\bf{G}} - {\cal C}} \right)$ is smaller than $1$. Since we know that the entries of ${{\bf{\Lambda }}_c}$ and ${{\bf{G}} - {\cal C}}$ are both non-negative, accordingly we can conclude that all the entries of ${(2{{\bf{\Lambda }}_c} - {\bf{G}} + {\cal C})^{ - 1}}$ are non-negative. The proof is now completed.
\end{proof}

\subsection{Proof of Proposition 2}
\begin{proof}
To prove that the spectrum radius of $[(2{{\bf{\Lambda }}_{\bf{c}}}-{\bf{\Lambda }})(2{{\bf{\Lambda }}_{\bf{c}}} - {\bf{G}} + {\mathcal{C}})^{-1}]^2$ is smaller than 1, we only need to prove that the spectrum radius of $(2{{\bf{\Lambda }}_{\bf{c}}} - {\bf{\Lambda }}){(2{{\bf{\Lambda }}_{\bf{c}}} - {\bf{G}} + {\cal C})^{ - 1}}$ is smaller than 1. We have
\begin{equation}
(2{{\bf{\Lambda }}_{\bf{c}}} - {\bf{\Lambda }}){(2{{\bf{\Lambda }}_{\bf{c}}} - {\bf{G}} + {\cal C})^{ - 1}}= (2{{\bf{\Lambda }}_{\bf{c}}} - {\bf{\Lambda }}){\left( {2{{\bf{\Lambda }}_{\bf{c}}}} \right)^{ - 1}}{\left( {\bf{I} - {{\left( {2{{\bf{\Lambda }}_{\bf{c}}}} \right)}^{ - 1}}\left( {{\bf{G}} + {\cal C}} \right)} \right)^{ - 1}}.
\end{equation}
The spectrum radius of ${\left( {{\bf{I}} - {{\left( {2{{\bf{\Lambda }}_{\bf{c}}}} \right)}^{ - 1}}\left( {{\bf{G}} + {\cal C}} \right)} \right)^{ - 1}}$ is smaller than $1$, which has been shown in the proof of Proposition 1. Then, the result of the first two terms is
\begin{eqnarray}
(2{{\bf{\Lambda }}_{\bf{c}}} - {\bf{\Lambda }}){\left( {2{{\bf{\Lambda }}_{\bf{c}}}} \right)^{ - 1}}&=&diag\left( {{{\left[ {\frac{{4{b_i} + c - 2{b_i}}}{{4{b_i} + c}}} \right]}^\top}} \right) \nonumber \\ &=&diag\left( {{{\left[ {\frac{{2{b_i} + c}}{{4{b_i} + c}}} \right]}^\top}} \right)<\bf{I}.
\end{eqnarray}
This indicates that the spectrum radius of the result for the first two terms is also smaller than 1. The proof is now completed.
\end{proof}

\subsection{Proof of Theorem 3}
\begin{proof}
Following~(\ref{Eq:8}) from Theorem 1, the revenue of the MNO ${\Pi ^{(k)}}$ is expressed by
\begin{eqnarray}
{\Pi ^{(k)}} &=& \left[{{\bf{a}}^{(k)}} - ({\bf{\Lambda }} - {\bf{G}} + {\mathcal{C}}){{\bf{x}}^{(k)}}\right]^\top{{\bf{x}}^{(k)}}\nonumber \\
 &=& \left[{{\bf{a}}^{(k)}} - ({\bf{\Lambda }} - {\bf{G}} + {\mathcal{C}}){(2{{\bf{\Lambda }}_c} - {\bf{G}} + {\mathcal{C}})^{ - 1}}{{\bf{a}}^{(k)}}\right]^\top {(2{{\bf{\Lambda }}_c} - {\bf{G}} + {\mathcal{C}})^{ - 1}}{{\bf{a}}^{(k)}}\nonumber \\
 &=& \left({{\bf{a}}^{(k)}}\right)^\top(2{{\bf{\Lambda }}_c} - {\bf{\Lambda }}){(2{{\bf{\Lambda }}_c} - {\bf{G}} + {\mathcal{C}})^{ - 2}}{{\bf{a}}^{(k)}}.
\end{eqnarray}
Then, according to~(\ref{Eq:5}), the revenue under our proposed pricing, ${\Pi _d}$, is given by
\begin{eqnarray}
&{\Pi _d}& = \sum\limits_{k = 1}^\infty  {{\Pi ^{(k)}}} = \sum\limits_{k = 1}^\infty  {\left({{\bf{a}}^{(k)}}\right)^\top(2{{\bf{\Lambda }}_c} - {\bf{\Lambda }}){{(2{{\bf{\Lambda }}_c} - {\bf{G}} + {\mathcal{C}})}^{ - 2}}{{\bf{a}}^{(k)}}}\nonumber \\
  &=& \sum\limits_{k = 1}^\infty  {\left({{\bf{a}}^{(k)}}\right)^\top\Big[{\bf{I}} - ({\bf{\Lambda }} - {\bf{G}} + {\mathcal{C}}){{(2{{\bf{\Lambda }}_c} - {\bf{G}} + {\mathcal{C}})}^{ - 1}}} {\Big]^{k - 1}}(2{{\bf{\Lambda }}_c} - {\bf{\Lambda}})\nonumber \\ &\times&
 {(2{{\bf{\Lambda }}_c} - {\bf{G}} + {\mathcal{C}})^{ - 2}}[{\bf{I}} - ({\bf{\Lambda }} - {\bf{G}} + {\mathcal{C}}) {(2{{\bf{\Lambda }}_c} - {\bf{G}} + {\mathcal{C}})^{ - 1}}]^{k - 1}{\bf{a}}\nonumber \\
  &=& \sum\limits_{k = 1}^\infty  \left({{\bf{a}}^{(k)}}\right)^\top{{\Big[(2{{\bf{\Lambda }}_c} - {\bf{\Lambda }}){{(2{{\bf{\Lambda }}_c} - {\bf{G}} + {\mathcal{C}})}^{ - 1}}\Big]}^{k - 1}}(2{{\bf{\Lambda }}_c} -  {\bf{\Lambda }})\nonumber \\ &\times&
  {{(2{{\bf{\Lambda }}_c} - {\bf{G}} + {\mathcal{C}})}^{-2}}{{\Big[{{(2{{\bf{\Lambda }}_c} - {\bf{G}} + {\mathcal{C}})}^{ - 1}}\Big]}^{k - 1}}{\bf{a}}\nonumber \\
 &=& \sum\limits_{k = 1}^\infty  {\left({{\bf{a}}^{(k)}}\right)^\top{{(2{{\bf{\Lambda }}_c} - {\bf{\Lambda }})}^{ - 1}}{{\Big[(2{{\bf{\Lambda }}_c} - {\bf{\Lambda }}){{(2{{\bf{\Lambda }}_c} - {\bf{G}} + {\mathcal{C}})}^{ - 1}}\Big]}^{2k}}{\bf{a}}} \nonumber \\
 &=&  {{\bf{a}}^\top{{(2{{\bf{\Lambda }}_c} - {\bf{\Lambda }})}^{ - 1}}{{\left[(2{{\bf{\Lambda }}_c} - {\bf{\Lambda }}){{(2{{\bf{\Lambda }}_c} - {\bf{G}} + {\mathcal{C}})}^{ - 1}}\right]}^2}} {\left\{ {\bf{I}} - {\left[({\bf{\Lambda }} - {\bf{G}} + {\mathcal{C}}){(2{{\bf{\Lambda }}_c} - {\bf{G}} + {\mathcal{C}})^{ - 1}}\right]^2}\right\} ^{ - 1}}{\bf{a}} \nonumber \\
 &=&   {{\bf{a}}^\top{{(2{{\bf{\Lambda }}_c} - {\bf{\Lambda }})}^{ - 1}}{{\Big[(2{{\bf{\Lambda }}_c} - {\bf{\Lambda }}){{(2{{\bf{\Lambda }}_c} - {\bf{G}} + {\mathcal{C}})}^{ - 1}}\Big]}^2}}  \nonumber \\ &\times&  ({\bf{\Lambda }} - {\bf{G}} + {\mathcal{C}}){(2{{\bf{\Lambda }}_c} - {\bf{G}} + {\mathcal{C}})^{ - 1}}\Big[2{\bf{I}} - ({\bf{\Lambda }} - {\bf{G}} + {\mathcal{C}}) {(2{{\bf{\Lambda }}_c} - {\bf{G}} + {\mathcal{C}})^{-1}}\Big] ^{ - 1}{\bf{a}}\nonumber \\  &=&   {{\bf{a}}^\top(2{{\bf{\Lambda }}_c} - {\bf{\Lambda }})
 (2{{\bf{\Lambda }}_c} - {\bf{G}} + {\mathcal{C}})^{ - 1}} {({\bf{\Lambda }} - {\bf{G}} + {\mathcal{C}})^{ - 1}}
 {\Big[2{\bf{I}} - ({\bf{\Lambda }} - {\bf{G}} + {\mathcal{C}}){(2{{\bf{\Lambda }}_c} - {\bf{G}} + {\mathcal{C}})^{ - 1}}\Big]^{ - 1}}{\bf{a}}.\nonumber \\
\end{eqnarray}
According to~\cite{zhang2016social}, the revenue under optimal static pricing ${\Pi}_s$ is expressed by
\begin{eqnarray}
{\Pi _s}&=& \sum\limits_{i \in \cal N} {\widehat {{p}}_i\widehat {{x}}_i} \nonumber \\
 &=& \left[{\bf{a}} - ({\bf{\Lambda }} - {\bf{G}} + {\cal C}){\bf{x}}\right]{(2{{\bf{\Lambda }}_c} - {\bf{G}} + {\cal C})^{ - 1}}{\bf{a}}\nonumber\\
 &=& \left[{\bf{a}} - ({\bf{\Lambda }} - {\bf{G}} + {\cal C}){(2{{\bf{\Lambda }}_c} - {\bf{G}} + {\cal C})^{ - 1}}\right]{\bf{a}}{(2{{\bf{\Lambda }}_c} - {\bf{G}} + {\cal C})^{ - 1}}{\bf{a}}\nonumber\\
&=& {\widehat {\bf{p}}}^\top {\widehat {\bf{x}}}={\bf{a}}^\top(2{{\bf{\Lambda }}_c} - {\bf{\Lambda }}){\left[{(2{{\bf{\Lambda }}_c} - {\bf{G}} + {\mathcal{C}})^{ - 1}}\right]^2}{\bf{a}}.
\end{eqnarray}
Then, we prove that ${\Pi _d} - {\Pi _s}$ is non-negative according to~(\ref{Eq:11}) with Proposition 2.
\begin{eqnarray}\label{Eq:11}
&{\Pi _d}-{\Pi _s}& = {\bf{a}}^\top(2{{\bf{\Lambda }}_c} - {\bf{\Lambda }}){(2{{\bf{\Lambda }}_c} - {\bf{G}} + {\mathcal{C}})^{ - 1}}{({\bf{\Lambda }} - {\bf{G}} + {\mathcal{C}})^{ - 1}}\nonumber\\ &\times& {\Big[2{\bf{I}} - ({\bf{\Lambda }} - {\bf{G}} + {\mathcal{C}}){(2{{\bf{\Lambda }}_c} - {\bf{G}} + {\mathcal{C}})^{ - 1}}\Big]^{ - 1}}{\bf{a}} - {\bf{a}}(2{{\bf{\Lambda }}_c} - {\bf{\Lambda }}){\Big[{(2{{\bf{\Lambda }}_c} - {\bf{G}} + {\mathcal{C}})^{ - 1}}\Big]^2}{\bf{a}} \nonumber\\
&=& {\bf{a}}^\top(2{{\bf{\Lambda }}_c} - {\bf{\Lambda }}){(2{{\bf{\Lambda }}_c} - {\bf{G}} + {\mathcal{C}})^{ - 1}}\bigg\{ 2{\bf{I}} - {({\bf{\Lambda }} - {\bf{G}} + {\mathcal{C}})^{ - 1}} \nonumber\\ &\times& {\Big[2{\bf{I}} - ({\bf{\Lambda }} - {\bf{G}} + {\mathcal{C}}){(2{{\bf{\Lambda }}_c} - {\bf{G}} + {\mathcal{C}})^{ - 1}}\Big]^{ - 1}} - (2{{\bf{\Lambda }}_c} - {\bf{G}} + {\mathcal{C}})^{-1}\bigg\} {\bf{a}} \nonumber\\ &=& {\bf{a}}^\top(2{{\bf{\Lambda }}_c} - {\bf{\Lambda }}){(2{{\bf{\Lambda }}_c} - {\bf{G}} + {\mathcal{C}})^{ - 1}}{({\bf{\Lambda }} - {\bf{G}} + {\mathcal{C}})^{ - 1}}\nonumber\\ &\times& \Bigg\{{\Big[2{\bf{I}} - ({\bf{\Lambda }} - {\bf{G}} + {\mathcal{C}}){(2{{\bf{\Lambda }}_c} - {\bf{G}} + {\mathcal{C}})^{ - 1}}\Big]^{ - 1}} - ({\bf{\Lambda }} - {\bf{G}} + {\mathcal{C}}){(2{{\bf{\Lambda }}_c} - {\bf{G}} + {\mathcal{C}})^{ - 1}}\Bigg\} {\bf{a}}\nonumber\\&=& \underbrace {{\bf{a}}^\top(2{{\bf{\Lambda }}_c} - {\bf{\Lambda }}){(2{{\bf{\Lambda }}_c} - {\bf{G}} + {\cal C})^{ - 1}}{({\bf{\Lambda }} - {\bf{G}} + {\cal C})^{ - 1}}}_{\ge 0} \nonumber\\
&\times& \underbrace {{{\Big[2{\bf{I}} - ({\bf{\Lambda }} - {\bf{G}} + {\cal C}){{(2{{\bf{\Lambda }}_c} - {\bf{G}} + {\cal C})}^{ - 1}}\Big]}^{ - 1}}}_{ \ge 0}\underbrace {{{\Big[({\bf{\Lambda }} - {\bf{G}} + {\cal C}){{(2{{\bf{\Lambda }}_c} - {\bf{G}} + {\cal C})}^{ - 1}} - {\bf{I}}\Big]}^2}}_{ > 0} \ge 0.\nonumber\\
\end{eqnarray}

In addition, we compare the total utilities under our proposed sequential dynamic pricing with that under optimal static pricing~\cite{zhang2016social}. Let $\bf{y}$ and $\bf{x}$ be the optimal social data demand in proposed dynamic pricing and the optimal static pricing, respectively, and $\mathscr{U}_d$ in~(\ref{Eq:10}) is expressed by
\begin{eqnarray}
{{\mathscr U}_d}&=& {{\bf{a}}^\top}{\bf{y}} - {{\bf{y}}^\top}\frac{{\bf{\Lambda }}}{2}{\bf{y}} + {{\bf{y}}^\top}{\bf{G}}{\bf{y}} - {{\bf{y}}^\top}\frac{{{\mathcal{C}}}}{2}{\bf{y}} - {\Pi _d}\nonumber\\
 &=& {{\bf{a}}^\top}{\bf{y}} - {{\bf{y}}^\top}\left(\frac{{\bf{\Lambda }}}{2} + {\bf{G}} - \frac{{{\mathcal{C}}}}{2}\right){\bf{y}} - {\Pi _d}\nonumber\\
 &=& {{\bf{a}}^\top}{({\bf{\Lambda }} - {\bf{G}} + {\mathcal{C}})^{ - 1}}{\bf{a}} - {{\bf{a}}^\top}{({\bf{\Lambda }} - {\bf{G}} + {\mathcal{C}})^{ - 1}}\left(\frac{{\bf{\Lambda }}}{2} + {\bf{G}} - \frac{{{\mathcal{C}}}}{2}\right) {({\bf{\Lambda }} - {\bf{G}} + {\mathcal{C}})^{ - 1}}{\bf{a}} - {\Pi _d}\nonumber\\
 &=& {{\bf{a}}^\top}{({\bf{\Lambda }} - {\bf{G}} + {\mathcal{C}})^{ - 1}}\frac{{{\bf{\Lambda }} + {\mathcal{C}}}}{2}{({\bf{\Lambda }} - {\bf{G}} + {\mathcal{C}})^{ - 1}}{\bf{a}}  - {\Pi _d}.
\end{eqnarray}
From~\cite{zhang2016social}, we obtain $\mathscr{U}_s$, which is shown as:
\begin{eqnarray}
{{\mathscr U}_s} &=& {{\bf{a}}^\top}{\bf{x}} - {{\bf{x}}^\top}(\frac{{\bf{\Lambda }}}{2} + {\bf{G}} - \frac{{{\mathcal{C}}}}{2}){\bf{x}} - {\Pi _s}\nonumber\\
 &=& {{\bf{a}}^\top}{(2{{\bf{\Lambda }}_c} - {\bf{G}} + {\mathcal{C}})^{ - 1}}\left[{\bf{I}} - \left(\frac{{\bf{\Lambda }}}{2} + {\bf{G}} - \frac{{{\mathcal{C}}}}{2}\right)
 {(2{{\bf{\Lambda }}_c} - {\bf{G}} + {\mathcal{C}})^{ - 1}}\right]{\bf{a}} - {\Pi _s}\nonumber\\
 &=& {{\bf{a}}^\top}{(2{{\bf{\Lambda }}_c} - {\bf{G}} + {\mathcal{C}})^{ - 1}}\Big(2{{\bf{\Lambda }}_c} - \frac{{\bf{\Lambda }}}{2} + \frac{{{\mathcal{C}}}}{2}\Big)
 {\left(2{{\bf{\Lambda }}_c} - {\bf{G}} + {\mathcal{C}}\right)^{ - 1}}{\bf{a}} - {\Pi _s}.
\end{eqnarray}
Then, we let $\bf{z}$ be ${{\bf{a}}^\top}{({\bf{\Lambda }} - {\bf{G}} + {\mathcal{C}})^{ - 1}}{(2{{\bf{\Lambda }}_c} - {\bf{G}} + {\mathcal{C}})^{ - 1}}$. Therefore, we prove that ${{\mathscr U}_s} - {{\mathscr  U}_s}$ is non-negative based on~(\ref{Eq:12}), as shown as follows:
\begin{eqnarray}\label{Eq:12}
&&{{\mathscr U}_d}-{{\mathscr U}_s} = {\bf{z}}\frac{{{\bf{\Lambda }} + {\mathcal{C}}}}{2}(2{{\bf{\Lambda }}_c} + {\bf{\Lambda }} - 2{\bf{G}} + 2{\mathcal{C}})(2{{\bf{\Lambda }}_c} - {\bf{\Lambda }}){\bf{z}}^\top - {\bf{z}}(2{{\bf{\Lambda }}_c} - {\bf{\Lambda }})\nonumber\\ &\times&
{\left[2{\bf{I}} - ({\bf{\Lambda }} - {\bf{G}} + {\mathcal{C}}){(2{{\bf{\Lambda }}_c} - {\bf{G}} + {\mathcal{C}})^{ - 1}}\right]^{ - 1}} (2{{\bf{\Lambda }}_c} - {\bf{G}} + {\mathcal{C}})({\bf{\Lambda }} - {\bf{G}} + {\mathcal{C}}){\bf{z}}^\top\nonumber\\
&=& {\bf{z}}(2{{\bf{\Lambda }}_c} - {\bf{\Lambda }})\bigg\{ \frac{{{\bf{\Lambda }} + {\mathcal{C}}}}{2}(2{{\bf{\Lambda }}_c} + {\bf{\Lambda }} - 2{\bf{G}} + 2{\mathcal{C}}) \nonumber \\  &-& \big[{2{\bf{I}} - ({\bf{\Lambda }} - {\bf{G}} + {\mathcal{C}}){(2{{\bf{\Lambda }}_c} - {\bf{G}} + {\mathcal{C}})^{ - 1}}\big]^{ - 1}}(2{{\bf{\Lambda }}_c} - {\bf{G}} + {\mathcal{C}}) ({\bf{\Lambda }} - {\bf{G}} + {\mathcal{C}})\bigg\} {\bf{z}}^\top\nonumber\\
 &=& {\bf{z}}(2{{\bf{\Lambda }}_c} - {\bf{\Lambda }}){\big[2{\bf{I}} - ({\bf{\Lambda }} - {\bf{G}} + {\mathcal{C}}){(2{{\bf{\Lambda }}_c} - {\bf{G}} + {\mathcal{C}})^{ - 1}}\big]^{ - 1}}{(2{{\bf{\Lambda }}_c} - {\bf{G}} + {\mathcal{C}})^{ - 1}} \nonumber\\ &\times& \bigg\{ \big[2(2{{\bf{\Lambda }}_c} - {\bf{G}} + {\mathcal{C}}) - ({\bf{\Lambda }} - {\bf{G}} + {\mathcal{C}})\big]\frac{{{\bf{\Lambda }} + {\mathcal{C}}}}{2} (2{{\bf{\Lambda }}_c} + {\bf{\Lambda }} - 2{\bf{G}} + 2{\mathcal{C}}) \nonumber\\ &-& {(2{{\bf{\Lambda }}_c} - {\bf{G}} + {\mathcal{C}})^2}({\bf{\Lambda }} - {\bf{G}} + {\mathcal{C}})\bigg\} {\bf{z}}^\top\nonumber\\
 &=& {\bf{z}}(2{{\bf{\Lambda }}_c} - {\bf{\Lambda }}){\big[2{\bf{I}} - ({\bf{\Lambda }} - {\bf{G}} + {\mathcal{C}}){(2{{\bf{\Lambda }}_c} - {\bf{G}} + {\mathcal{C}})^{ - 1}}\big]^{ - 1}}{(2{{\bf{\Lambda }}_c} - {\bf{G}} + {\mathcal{C}})^{ - 1}} \nonumber\\ &\times& \bigg\{ \big[2(2{{\bf{\Lambda }}_c} - {\bf{G}} + {\mathcal{C}}) - ({\bf{\Lambda }} - {\bf{G}} + {\mathcal{C}})\big]\frac{{{\bf{\Lambda }} + {\mathcal{C}}}}{2} \left[\left(2{{\bf{\Lambda }}_c} - {\bf{G}} +{\mathcal{C}}\right) +\left({\bf{\Lambda }} - {\bf{G}} + {\mathcal{C}}\right)\right] \nonumber\\ &-& {(2{{\bf{\Lambda }}_c} - {\bf{G}} + {\mathcal{C}})^2}({\bf{\Lambda }} - {\bf{G}} + {\mathcal{C}})\bigg\} {\bf{z}}^\top\nonumber\\
 &=& {\bf{z}}(2{{\bf{\Lambda }}_c} - {\bf{\Lambda }}){\big[2{\bf{I}} - ({\bf{\Lambda }} - {\bf{G}} + {\mathcal{C}}){(2{{\bf{\Lambda }}_c} - {\bf{G}} + {\mathcal{C}})^{ - 1}}\big]^{ - 1}}{(2{{\bf{\Lambda }}_c} - {\bf{G}} + {\mathcal{C}})^{ - 1}} \nonumber\\ &\times& \bigg\{ (2{{\bf{\Lambda }}_c} - {\bf{G}} + {\mathcal{C}})^2{\bf{G}} + \frac{1}{2}({{\bf{\Lambda }}} - {\bf{G}} + {\mathcal{C}})^2  \left[ (2{{\bf{\Lambda }}_c} - {\bf{G}} + {\mathcal{C}}) - ({{\bf{\Lambda }}} - {\bf{G}} + {\mathcal{C}}) \right] \nonumber \\ &+&\frac{1}{2} ({{\bf{\Lambda }}} - {\bf{G}} + {\mathcal{C}}){\bf{G}}\left[(2{{\bf{\Lambda }}_c} - {\bf{G}} + {\mathcal{C}}) + ({{\bf{\Lambda }}} - {\bf{G}} + {\mathcal{C}})\right]\bigg\}{\bf{z}}^\top\nonumber\\
&=& {\bf{z}}(2{{\bf{\Lambda }}_c} - {\bf{\Lambda }})\underbrace {{{\big[2{\bf{I}} - ({\bf{\Lambda }} - {\bf{G}} + {\mathcal{C}}){{(2{{\bf{\Lambda }}_c} - {\bf{G}} + {\mathcal{C}})}^{ - 1}}\big]}^{ - 1}}}_{ \ge 0}{(2{{\bf{\Lambda }}_c} - {\bf{G}} + {\mathcal{C}})^{ - 1}} \nonumber\\ &\times& \bigg\{ \underbrace {{{(2{{\bf{\Lambda }}_c} - {\bf{G}} + {\mathcal{C}})}^2}{\bf{G}}}_{ > 0} + \frac{1}{2}\underbrace {({\bf{\Lambda }} - {\bf{G}} + {\mathcal{C}})}_{ > 0}\underbrace {({\bf{\Lambda }} + {\mathcal{C}})}_{ > 0}\underbrace {\big[(2{{\bf{\Lambda }}_c} - {\bf{G}} + {\mathcal{C}}) - ({\bf{\Lambda }} - {\bf{G}} + {\mathcal{C}})\big]}_{ > 0}\bigg\} {\bf{z}}^\top \nonumber\\ &\ge& 0.
\end{eqnarray}
The proof is then completed.
\end{proof}

\subsection{Proof of Proposition 3}
\begin{proof}
Since we have
\begin{equation}
{\Phi _i}(T) =  - \left( {{\Xi _i}(T) + c\sum\limits_{j \in N} {\frac{{{\Xi _j}(T)}}{{2{b_j}}} - } \sum\limits_{j \in N} {{g_{ij}}\frac{{{\Xi _j}(T)}}{{2{b_j}}}} } \right),
\end{equation}
we can conclude that
\begin{eqnarray}
\Phi (T) &=&  - \left( {\Xi (T) + {\cal C}{{\bf{\Lambda }}^{ - 1}}\Xi (T) - {\bf{G}}{{\bf{\Lambda }}^{ - 1}}\Xi (T)} \right) \nonumber \\
 &=&  - \Xi (T)\left( {{\bf I} - ({\bf{G}} - {\cal C}){{\bf{\Lambda }}^{ - 1}}} \right).
\end{eqnarray}
Thus, we have
\begin{eqnarray}
\Phi (T) &=&  - \left( {{\bf I} - ({\bf{G}} - {\cal C}){{\bf{\Lambda }}^{ - 1}}} \right){\left( {{\bf I} - \frac{{T - 1}}{{T + 1}}\left( {{\bf{G}} - {\cal C}} \right){{\bf{\Lambda }}^{ - 1}}} \right)^{ - 1}}{\bf{a}}\nonumber \\
 &=&  - \left( {{\bf I} - ({\bf{G}} - {\cal C}){{\bf{\Lambda }}^{ - 1}}} \right)\left( {{\bf I} + \sum\limits_{s = 1}^\infty  {\left( {{{\left( {\left( {T - 1} \right)\left( {{\bf{G}} - {\cal C}} \right)} \right)}^s}{{\left( {\frac{{{{\bf{\Lambda }}^{ - 1}}}}{{T + 1}}} \right)}^s}} \right)} } \right){\bf{a}}\nonumber \\
 &=&  - \Bigg( {\bf{I}} + \sum\limits_{s = 1}^\infty \left( {{{\left( {\frac{{{\bf{G}} - {\cal C}}}{{\bf{\Lambda }}}} \right)}^s}{{\left( {\frac{{T - 1}}{{T + 1}}} \right)}^s}} \right) - ({\bf{G}} - {\cal C}){{\bf{\Lambda }}^{ - 1}} \nonumber \\ &-& \left( {\frac{{T + 1}}{{T - 1}}} \right)\sum\limits_{s = 1}^\infty  {\left( {{{\left( {\frac{{{\bf{G}} - {\cal C}}}{{\bf{\Lambda }}}} \right)}^{s + 1}}{{\left( {\frac{{T - 1}}{{T + 1}}} \right)}^{s + 1}}} \right)} \Bigg){\bf{a}}\nonumber \\
 &=&  - \Bigg( {\bf{I}} + \sum\limits_{s = 1}^\infty \left( {{{\left( {\frac{{{\bf{G}} - {\cal C}}}{{\bf{\Lambda }}}} \right)}^s}{{\left( {\frac{{T - 1}}{{T + 1}}} \right)}^s}} \right) - ({\bf{G}} - {\cal C}){{\bf{\Lambda }}^{ - 1}} \nonumber \\ &-& \left( {\frac{{T + 1}}{{T - 1}}} \right)\sum\limits_{s = 2}^\infty  {\left( {{{\left( {\frac{{{\bf{G}} - {\cal C}}}{{\bf{\Lambda }}}} \right)}^s}{{\left( {\frac{{T - 1}}{{T + 1}}} \right)}^s}} \right)} \Bigg){\bf{a}}\nonumber \\
 &=&  - \Bigg( {\bf{I}} + \sum\limits_{s = 1}^\infty \left( {{{\left( {\frac{{{\bf{G}} - {\cal C}}}{{\bf{\Lambda }}}} \right)}^s}{{\left( {\frac{{T - 1}}{{T + 1}}} \right)}^s}} \right) - ({\bf{G}} - {\cal C}){{\bf{\Lambda }}^{ - 1}} \nonumber \\ &-& \left( {\frac{{T + 1}}{{T - 1}}} \right)\sum\limits_{s = 1}^\infty  {\left( {{{\left( {\frac{{{\bf{G}} - {\cal C}}}{{\bf{\Lambda }}}} \right)}^s}{{\left( {\frac{{T - 1}}{{T + 1}}} \right)}^s}} \right)}  + \frac{{{\bf{G}} - {\cal C}}}{{\bf{\Lambda }}} \Bigg){\bf{a}}\nonumber \\
 &=&  - \left( {{\bf{I}} - \frac{2}{{T - 1}}\sum\limits_{s = 1}^\infty  {\left( {{{\left( {\frac{{{\bf{G}} - {\cal C}}}{{\bf{\Lambda }}}} \right)}^s}{{\left( {\frac{{T - 1}}{{T + 1}}} \right)}^s}} \right)} } \right){\bf{a}}.
\end{eqnarray}
Under Assumption 1, we have
\begin{equation}
\frac{{{\bf{G}} - {\cal C}}}{{\bf{\Lambda }}} {\bf 1} < 1,
\end{equation}
and thus we have
\begin{eqnarray}
\Phi (T) &<&  - \left( {{\bf{I}} - \frac{2}{{T - 1}}\sum\limits_{s = 1}^\infty  {\left( {{{\left( 1 \right)}^s}{{\left( {\frac{{T - 1}}{{T + 1}}} \right)}^s}} \right)} } \right) a{\bf{1}}\nonumber \\
 &=&  - \left( {{\bf{I}} - {\bf{I}}\frac{2}{{T - 1}}\frac{{\frac{{T - 1}}{{T + 1}}}}{{1 - \frac{{T - 1}}{{T + 1}}}}} \right) a {\bf{1}}\nonumber \\
 &=& 0.
\end{eqnarray}
The proof is now completed.
\end{proof}
\bibliography{bibfile}
\end{document}